\newtheorem{theorem}{Theorem}[section]
\newtheorem{corollary}[theorem]{Corollary}
\newtheorem{lemma}[theorem]{Lemma}
\newtheorem{proposition}[theorem]{Proposition}
\newtheorem{definition}[theorem]{Definition}
\newtheorem{example}[theorem]{Example}
\providecommand{\bigsqcap}{%
  \mathop{%
    \mathpalette\@updown\bigsqcup
  }%
}
\newcommand*{\@updown}[2]{%
  \rotatebox[origin=c]{180}{$\m@th#1#2$}%
}
\newcolumntype{L}{>{\RaggedRight\arraybackslash}X}
\newcolumntype{C}{>{\centering\arraybackslash}X}
\newcolumntype{N}{>{\RaggedRight\arraybackslash}m{4cm}} 
\newcolumntype{T}{>{\RaggedRight\arraybackslash}m{3.7cm}} 
\newcolumntype{W}{>{\raggedright\arraybackslash}m{5.7cm}} 
\newcolumntype{F}{>{\centering\arraybackslash}m{3.25cm}} 
\newcommand{\hcell}[1]{\parbox[c][1.5\baselineskip][c]{\linewidth}{#1}}
\newcommand{\CC}{{\mathscr{L}}}
\newcommand{\CCP}{{\mathscr{P}}}
\begin{document}

\title{Constrained Cuts, Flows, and Lattice-Linearity\thanks{Partially supported by NSF CNS-1812349, and the Cullen Trust for
Higher Education Endowed Professorship.}}

\author{\IEEEauthorblockN{Robert Streit}
\IEEEauthorblockA{\textit{Dept. of Electrical and Computer Engineering} \\
\textit{The University of Texas at Austin}\\
Austin, TX 78712, USA \\
rpstreit@utexas.edu}
\and
\IEEEauthorblockN{Vijay K. Garg}
\IEEEauthorblockA{\textit{Dept. of Electrical and Computer Engineering} \\
\textit{The University of Texas at Austin}\\
Austin, TX 78712, USA \\
garg@utexas.edu}
}

\maketitle

\begin{abstract}
In a capacitated directed graph, it is known that the set of all min-cuts forms a distributive lattice \cite{picard1980structure,queyranne1998minimizing}. 
Here, we describe this lattice as a regular predicate whose forbidden elements can be advanced in constant parallel time after precomputing a max-flow, so as to obtain parallel algorithms for min-cut problems with additional constraints encoded by lattice-linear predicates \cite{DBLP:conf/spaa/Garg20}.
Some nice algorithmic applications follow.
First, we use these methods to compute the irreducibles of the sublattice of min-cuts satisfying a regular predicate.
By Birkhoff's theorem \cite{birkhoff1937rings} this gives a succinct representation of such cuts, and so we also obtain a general algorithm for enumerating this sublattice.
Finally, though we prove computing min-cuts satisfying additional constraints is \NP-hard in general, we use poset slicing \cite{garg2001slicing,Garg06-tcs} for exact algorithms (with constraints not necessarily encoded by lattice-linear predicates) with better complexity than exhaustive search.
We also introduce \emph{$k$-transition predicates} and \emph{strong advancement} for improved complexity analyses of lattice-linear predicate algorithms in parallel settings,
which is of independent interest.
\end{abstract}

\begin{IEEEkeywords}
Lattice-Linearity, Predicate Detection, Irreducible Min-Cuts, Distributive Lattices
\end{IEEEkeywords}

\section{Introduction}
Network problems are fundamental in combinatorial optimization, with applications such as transportation \cite{schrijver2002history}, communication reliability \cite{yazdi2010max}, and image processing \cite{jensen2022review}. Famously, the Max-Flow Min-Cut Theorem \cite{ford1956maximal} establishes that the maximum flow (henceforth max-flow) value from a source \(s\) to a sink \(t\) in a capacitated graph equals the minimum capacity of a cut separating them (henceforth \emph{min-cut}). 
So, the prevailing method for generating min-cuts is running efficient max-flow algorithms (like \cite{edmonds1972theoretical,dinitz2006dinitz,goldberg1988new,orlin2013max,chen2025maximum})
and operating on the resulting residual graph. 

\begin{figure}[t]
\caption{An $(s,t)$-flow network with a unique min-cut and $2^n$ possible integral max-flows.
The min-cut is that given by partitioning around the edges intersecting with the dashed lined.
For any $i$, a max-flow  saturates exactly one of the paths through $a_i$ or $b_i$.
Hence, there are $2^n$ integral flows.}\label{fig:exp-maxflow}
\centering
\scalebox{.75}{
\begin{tikzpicture}

\node[draw, circle] (s) at (0,0) {$s$};
\node at (0,4.5) {$~$};

\node[draw, circle] (a1) at (3.25,3.75) {$a_1$};
\node[draw, circle] (b1) at (3.25,2.75) {$b_1$};
\node[draw, circle] (a2) at (3.25,1.5) {$a_2$};
\node[draw, circle] (b2) at (3.25,.5) {$b_2$};
\node at (3.25,-.4) {\Large$\vdots$};
\node[draw, circle] (an) at (3.25,-1.5) {$a_n$};
\node[draw, circle] (bn) at (3.25,-2.5) {$b_n$};

\node[draw, circle] (c1) at (6.5,2.75) {$c_1$};
\node[draw, circle] (c2) at (6.5,.5) {$c_2$};
\node at (6.5,-.4) {\Large$\vdots$};
\node[draw, circle] (cn) at (6.5,-1.5) {$c_n$};

\node[draw, circle] (t) at (9.75,0) {$t$};

\draw[->] (s) to node[above left] {$1$} (a1);
\draw[->] (s) to node[above right,xshift=2.5pt,yshift=7.5pt] {$1$} (b1);
\draw[->] (s) to node[above right,xshift=2.5pt,yshift=5pt] {$1$} (a2);
\draw[->] (s) to node[above right,xshift=2.5pt] {$1$} (b2);
\draw[->] (s) to node[below right,xshift=2.5pt,yshift=-2.5pt] {$1$} (an);
\draw[->] (s) to node[below left] {$1$} (bn);

\draw[->] (a1) to node[above] {$1$} (c1);
\draw[->] (b1) to node[below] {$1$} (c1);
\draw[->] (a2) to node[above] {$1$} (c2);
\draw[->] (b2) to node[below] {$1$} (c2);
\draw[->] (an) to node[above] {$1$} (cn);
\draw[->] (bn) to node[below] {$1$} (cn);

\draw[->] (c1) to node[above right] {$1$} (t);
\draw[->] (c2) to node[above] {$1$} (t);
\draw[->] (cn) to node[below right] {$1$} (t);

\draw[dashed, gray] (7.5,3.75) -- (7.5,-2.5);

\end{tikzpicture}}
\end{figure}

However, if one asks for a specific min-cut satisfying additional constraints for downstream applications, then generating the right min-cut is less straightforward.
Complicating the matter is the sizes of the solution spaces, and a duality lacking clear association between the identities of flows and cuts.
For example, there is an $(s,t)$-flow network with a single max-flow and an exponential number of min-cuts.
One places $n$ vertices, each with unit capacity edges between itself and $s$ and itself and $t$.
Then, the only max-flow will saturate every edge, while there are $2^n$ min-cuts corresponding to choosing either the edge to $s$ or edge to $t$ for each vertex.
See Figure~\ref{fig:hardness}.
Conversely, there is a network with a single min-cut and exponential number of max-flows,
which is best explained by examining Figure~\ref{fig:exp-maxflow}.
So, an efficient algorithm for min-cuts under additional constraints must use additional structure.

One such structure is the distributive lattice of min-cuts \cite{picard1980structure,queyranne1998minimizing}.
Recently, \cite{DBLP:conf/spaa/Garg20} showed that many problems on distributive lattices can be solved in a simple and unifying way using \emph{lattice-linear predicate detection}.
The idea is to model the problem as satisfaction of a \emph{lattice-linear} predicate, which informally is such that the satisfying preimage forms a semilattice under some meet operation.
Lattice-linear predicate detection can be solved by a universal procedure admitting simple parallel algorithmic implementations \cite{DBLP:conf/spaa/Garg20}.
Due to its generality, this idea has been applied to a diverse selection of problems detailed in Section~\ref{sec:related-work}.
And, although identifying min-cuts satisfying additional constraints is \NP-hard in general (see Section~\ref{sec:constraint-satisfying-mincuts}), by modeling min-cut computations through a lattice-linear predicate we provide an efficient method, Algorithm~\ref{alg:constrained-cuts}, for min-cuts satisfying additional constraints encoded by lattice-linear predicates.
We give some examples of lattice-linear predicates pertaining to min-cut problems in Section~\ref{subsec:prelim-mincut}.
In particular, we propose \emph{implicational cuts} that can be used to preserve some graph structure of interest in one side of the cut, and describe \emph{uniformly directed cuts}, examined in prior literature in the context of network reliability problems \cite{provan1989exact,provan1996paradigm} and efficient estimators of random variables on networks \cite{sigal1980stochastic,avramidis1996integrated}, as lattice-linear predicates as well.

\begin{table*}[t]
    \centering
    \caption{Our algorithmic contributions.  $T_B$ and $W_B$ are the (parallel) time and work complexities of computing forbidden vertices of the lattice-linear predicate over the cuts on a graph with $n$ vertices and $m$ edges, while $T_\text{MF}$ and $W_\text{MF}$ are the same for max-flows.
The exception to this is Algorithm~\ref{alg:slicing}, where $T_B$ and $W_B$ are the complexities of evaluating $B$.}
\label{tab:algorithm-complexities}
\begin{tabularx}{\textwidth}{|N|T W L|}
\hline
\textbf{Algorithm} & \textbf{Time} & \textbf{Work} & \textbf{Notes} \\
\hline\hline

\multirow{2}{=}{Min-cuts satisfying lattice-linear $B$ (Alg.~\ref{alg:constrained-cuts})}
&
\hcell{$O(T_\text{MF} + nT_B)$}
&
\hcell{$O(W_\text{MF} + n(m + W_B))$}
&
\hcell{General lattice-linear}
\\ \cline{2-4}
&
$O(T_\text{MF} + kT_B)$
&
$O(W_\text{MF} + k(m + W_B))$
&
$k$-transition with strong advancement
\\ \hline

\multirow{2}{=}{Irreducibles of min-cuts satisfying regular $B$ (Alg.~\ref{alg:irreducibles})}
&
\hcell{$O(T_\text{MF} + n T_B)$}
&
\hcell{$O(W_\text{MF} + n^2(m + n + W_{B}))$}
&
\hcell{General regular}
\\ \cline{2-4}
&
$O(T_\text{MF} + kT_B)$
&
$O(W_\text{MF} + kn(m + n + W_B))$
&
$k$-transition with strong advancement
\\ \hline

Enumerating min-cuts satisfying regular $B$ (Alg.~\ref{alg:enumeration})
&
$O(\log n)$ delay
&
$O(n^2)$ delay
&
Complexity per listed cut, precomputation omitted.
\\ \hline

Min-cut satisfying $B\wedge B_\text{reg}$, $B_\text{reg}$ regular and $B$ general (Alg.~\ref{alg:slicing})
&
$O(T_\text{MF}+nT_\text{reg} + (R\log n)\,T_B)$
&
$O\bigl(W_{\mathrm{MF}} 
      + n^2\bigl(m+n+W_{\mathrm{reg}}\bigr) 
      + (Rn^2)\,W_B\bigr)$
&
$R$ is number of min-cuts satisfying $B_\text{reg}$
\\ \hline

\end{tabularx}
\end{table*}

With that said, the appeal of lattice-linear predicate detection lies in its flexibility.
Because of this, we are able to derive many algorithmic applications in Section~\ref{sec:applic}.
Specifically, we begin by examining methods for computing the irreducibles of the sublattice of min-cuts satisfying a \emph{regular} predicate, that is a predicate whose satisfying preimage is closed under meet and join, in Section~\ref{subsec:irreducibles}.
This has downstream application towards constraint based extensions of recent works proposing efficient algorithms for \emph{diverse} solutions to min-cuts and other combinatorial optimization problems on distributive lattices which operate on the irreducibles \cite{de2023finding, de2025finding, iwamasa2025generalframeworkfindingdiverse}.
Furthermore, in Section~\ref{subsec:k-trans} we introduce \emph{$k$-transition predicates}, which allow us to get improved instance dependent guarantees for the parallel time complexities of lattice-linear predicate detection algorithms based on the structure of the given predicate.
We use this notion to tighten the analysis of our algorithms, and show that one can identify irreducible min-cuts (in absence of any other constraints) in parallel time equivalent to computing a single max-flow.
Then, using the succinct representation provided by the irreducibles, we propose an enumeration algorithm in Section~\ref{subsec:enumeration} based off traversing the ideals of the subposet of irreducible elements of min-cuts satisfying any regular predicate.
The generality of this method allows one to apply our algorithm to list the (unconstrained) min-cuts, the uniformly directed min-cuts, and any other constraint satisfying min-cuts defined by regular predicates without the need to modify the algorithm or rederive new results.
Finally, in Section~\ref{subsec:general} we use our enumeration method to obtain a poset-slicing \cite{garg2001slicing,Garg06-tcs} algorithm for identifying min-cuts satisfying general (possibly non-lattice-linear) predicates, with complexity better than simple exhaustive search.
Our algorithmic contributions are summarized in Table~\ref{tab:algorithm-complexities}, with their parallel time and work complexities explicitly listed.

\subsection{Organization}
We first survey the prior art in Section~\ref{sec:related-work}, and then review the preliminaries in Section~\ref{sec:preliminaries}.
Importantly, in Section~\ref{subsec:prelim-mincut} we present our method for modeling constraint satisfying minimum cuts on distributive lattices, and identify meaningful examples of constraints encoded by lattice-linear predicates.
Following this we present our technical contributions:
\begin{itemize}
    \item First, Section~\ref{sec:mincutLattice} proves min-cuts can be described as a regular predicate in Theorem~\ref{thm:min-cuts-ll}, whose \emph{forbidden} vertices can be computed via a single max-flow computation,
    \item Then, we define the \emph{constraint satisfying min-cuts} problem, show \NP-hardness, and propose Algorithm~\ref{alg:constrained-cuts} for efficient solution in the presence of constraints encoded by a lattice-linear predicate in Sections~\ref{sec:constraint-satisfying-mincuts} and~\ref{subsec:llp-constrained},
    \item Section~\ref{subsec:k-trans} then proposes $k$-transition predicates and proves (Theorem~\ref{thm:k-trans}) that Algorithm~\ref{alg:constrained-cuts} terminates in $O(k)$ rounds in the presence of strong advancement,
    \item  Next, Section~\ref{subsec:irreducibles} uses Algorithm~\ref{alg:constrained-cuts} to derive Algorithm~\ref{alg:irreducibles}, which computes the irreducibles of a sublattice of min-cuts satisfying regular predicates,
    \item With the irreducibles in hand, Section~\ref{subsec:enumeration} uses Birkhoff's theorem to propose an enumeration method in Algorithm~\ref{alg:enumeration} for min-cuts satisfying regular predicates,
    \item Finally, Section~\ref{subsec:general} uses Algorithm~\ref{alg:enumeration} and poset slicing \cite{garg2001slicing,Garg06-tcs} to obtain an exact search method for the case of general constraints in Algorithm~\ref{alg:slicing}, with improved complexity over exhaustive search.
\end{itemize}

\section{Related Work}\label{sec:related-work}

Our ability to apply lattice-linear predicate detection begins from the observation that the set of min-cuts forms a distributive lattice \cite{picard1980structure}.
An enumeration algorithm for min-cuts was presented in \cite{provan1996paradigm}.
Their method is based on partitioning the graph structure, and cleverly traversing the resulting decomposition.
We obtain a different enumeration algorithm by instead precomputing the irreducibles.
An advantage of our method is it is more robust towards being applied in other scenarios described through additional constraints, as it applies to any conjunction of regular predicates.
For example, \cite{provan1996paradigm} also present enumeration methods for uniformly directed cuts, which we describe via regular predicates in Section~\ref{subsec:prelim-mincut}.
So, we recover their enumeration method for uniformly directed cuts ``for free,'' as well any other constraint structures defined by regular predicates.
Other methods exist for enumerating min-cuts \cite{picard1980structure, ball1983calculating, gardner1985algorithm}, but have efficiency issues when applied to directed graphs (see \cite{provan1996paradigm}).
There are also methods for enumerating global min-cuts \cite{henzinger2020finding} and variants \cite{karger2016enumerating, beideman2023approximate} beyond the scope of this work.

Application of lattice-linear predicate detection towards combinatorial problems was introduced in \cite{DBLP:conf/spaa/Garg20} by building on algorithms for verifying distributed systems \cite{chase1998detection}.
A key feature is the ability to analyze and solve a large variety of combinatorial problems in a unifying way.
The assignment problem, scheduling, and stable marriage were investigated in \cite{DBLP:conf/spaa/Garg20}, while follow up works have applied lattice-linearity to dynamic programming \cite{Garg:ICDCN22}, minimum spanning trees \cite{AlvGar22}, the housing market problem \cite{garg2021lattice}, and multiplication and modulo computations \cite{gupta2025tolerance}.
Furthermore, applications to self-stabilizing and asynchronous distributed computing algorithms are also examined in \cite{gupta2021extending, gupta2024tolerance, gupta2025tolerance, gupta2024eventually}.
Finally, \cite{DBLP:conf/icdcn/GargS24} recently introduced another class of predicates, called \emph{equilevel predicates}, with relationships to lattice-linear predicates.

\section{Preliminaries}\label{sec:preliminaries}

Some conventions:
Given a set $X$ and $y \notin X$, we use $X + y \triangleq X \cup \{y\}$ to denote the union of a set with a singleton.
We do the same for difference, with $X - y \triangleq X\setminus \{y\}$.
We assume some familiarity with order theory, see \cite{davey,Gar:2015:bk}.

\subsection{Lattice-Linearity}\label{subsec:prelim-llp}
Examine a partially ordered set (poset) $\CCP = (\mathcal{X}, \leq)$.
We refer to the \emph{meet} (greatest lower bound) of $\mathbf{x}$ and $\mathbf{y}$ by $\mathbf{x} \sqcap \mathbf{y}$ and the \emph{join} (least upper bound) by $\mathbf{x} \sqcup \mathbf{y}$, when they exist.
A poset is a \emph{semilattice} whenever it is closed under one of join or meet, and a \emph{lattice} when closed under both.
A lattice $\CC$ is distributive if the join operation distributes over the meet operation in the algebraic sense.
For example, a \emph{ring of sets}, i.e. a set family closed under union and intersection (its join and meet, respectively), is a distributive lattice.
A \emph{sublattice} is a subset of the lattice closed under its meet and join operation.
A sublattice of a distributive lattice is always distributive.
A \emph{(join-)irreducible} $\mathbf{i} \in \CC$ is a lattice element such that $\mathbf{x}\sqcup\mathbf{y} = \mathbf{i}$ implies $\mathbf{x} = \mathbf{i}$ or $\mathbf{y} = \mathbf{i}$.
An \emph{(order) ideal} of the poset $\CCP$ is a set $X \subseteq \mathcal{X}$ such that
$y \in X$ and $z \leq y$ implies $z \in X$.
A \emph{filter} is the dual concept.
Birkhoff's representation theorem establishes an equivalence between distributive lattices and the ideals of the poset of its irreducibles.

\begin{theorem}[Birkhoff's Representation Theorem \cite{birkhoff1937rings}]
    Every distributive lattice $\CC$ is isomorphic (i.e. there exists an order-preserving bijection) to the ideals of the poset of its irreducibles ordered according to $\CC$.
\end{theorem}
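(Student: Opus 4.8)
The plan is to exhibit the isomorphism explicitly and verify its defining properties; the argument is the classical one, and we may assume $\CC$ is finite, as it is in all of the applications in this paper. Write $J(\CC)$ for the subposet of join-irreducibles of $\CC$ (with the order inherited from $\CC$, and with the bottom element $\bot$ not counted as irreducible, so that $\bot$ is the empty join), and let $\mathcal{O}(J(\CC))$ be the lattice of order ideals of $J(\CC)$ under inclusion. Define $\phi\colon \CC \to \mathcal{O}(J(\CC))$ by $\phi(\mathbf{x}) \triangleq \{\mathbf{i} \in J(\CC) : \mathbf{i} \leq \mathbf{x}\}$. First I would check that $\phi$ is well defined, i.e. that $\phi(\mathbf{x})$ is an order ideal: if $\mathbf{j} \leq \mathbf{i}$ in $J(\CC)$ with $\mathbf{i} \in \phi(\mathbf{x})$, then $\mathbf{j} \leq \mathbf{i} \leq \mathbf{x}$, so $\mathbf{j} \in \phi(\mathbf{x})$. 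One direction of monotonicity is then immediate from transitivity: $\mathbf{x} \leq \mathbf{y}$ implies $\phi(\mathbf{x}) \subseteq \phi(\mathbf{y})$.

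The lemma that powers everything else is that in a finite lattice every element is the join of the join-irreducibles beneath it, i.e. $\mathbf{x} = \bigsqcup \phi(\mathbf{x})$ for all $\mathbf{x} \in \CC$. I would prove this by induction on the height of $\mathbf{x}$: if $\mathbf{x}$ is join-irreducible then $\mathbf{x} \in \phi(\mathbf{x})$ and the claim is clear (and if $\mathbf{x} = \bot$ it is vacuous); otherwise $\mathbf{x} = \mathbf{y} \sqcup \mathbf{z}$ with $\mathbf{y},\mathbf{z}$ strictly below $\mathbf{x}$, so by the induction hypothesis $\mathbf{x} = \bigsqcup\phi(\mathbf{y}) \sqcup \bigsqcup\phi(\mathbf{z}) = \bigsqcup(\phi(\mathbf{y}) \cup \phi(\mathbf{z})) \leq \bigsqcup\phi(\mathbf{x}) \leq \mathbf{x}$, forcing equality. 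Granting this, injectivity of $\phi$ is immediate, since $\phi(\mathbf{x}) = \phi(\mathbf{y})$ yields $\mathbf{x} = \bigsqcup\phi(\mathbf{x}) = \bigsqcup\phi(\mathbf{y}) = \mathbf{y}$; and the converse direction of monotonicity follows the same way, $\phi(\mathbf{x}) \subseteq \phi(\mathbf{y})$ giving $\mathbf{x} = \bigsqcup\phi(\mathbf{x}) \leq \bigsqcup\phi(\mathbf{y}) = \mathbf{y}$. Hence $\phi$ is an order-embedding, and so its inverse on the image is order-preserving as well.

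The one step that genuinely uses distributivity — and the one I expect to be the crux — is surjectivity of $\phi$. Given an order ideal $I \subseteq J(\CC)$, put $\mathbf{x} \triangleq \bigsqcup I$; I must show $\phi(\mathbf{x}) = I$. The inclusion $I \subseteq \phi(\mathbf{x})$ is clear. Conversely, let $\mathbf{i} \in \phi(\mathbf{x})$, so $\mathbf{i} \leq \bigsqcup I$ and therefore $\mathbf{i} = \mathbf{i} \sqcap \bigsqcup I$. Finite distributivity rewrites this as $\mathbf{i} = \bigsqcup_{\mathbf{j}\in I}(\mathbf{i}\sqcap\mathbf{j})$, exhibiting the irreducible $\mathbf{i}$ as a join of elements all $\leq \mathbf{i}$; this join is nonempty since $\mathbf{i}\neq\bot$, and join-irreducibility propagates to finite joins by iterating the binary case, so $\mathbf{i} = \mathbf{i}\sqcap\mathbf{j}$ for some $\mathbf{j}\in I$, i.e. $\mathbf{i}\leq\mathbf{j}$; since $I$ is a down-set, $\mathbf{i}\in I$. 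Combining the pieces, $\phi$ is an order-preserving bijection with order-preserving inverse — that is, an order isomorphism between $\CC$ and the ideals of the poset of its irreducibles, as claimed.
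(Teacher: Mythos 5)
The paper does not prove this statement; it is quoted as a classical result with a citation to Birkhoff, so there is no in-paper argument to compare against. Your proof is the standard textbook proof of Birkhoff's representation theorem and it is correct: the decomposition lemma ($\mathbf{x} = \bigsqcup\phi(\mathbf{x})$, needing only finiteness) gives injectivity and the order-embedding property, and distributivity enters exactly once, in the surjectivity step, where $\mathbf{i} \leq \bigsqcup I$ is rewritten as $\mathbf{i} = \bigsqcup_{\mathbf{j}\in I}(\mathbf{i}\sqcap\mathbf{j})$ and join-irreducibility pins down a single $\mathbf{j}$. You also handle the two edge cases that are easy to fumble — excluding $\bot$ from the irreducibles so that the empty ideal corresponds to $\bot$, and noting the join in the surjectivity step is over a nonempty index set — so there is no gap. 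The finiteness assumption is a genuine restriction relative to the most general forms of the theorem, but it is the right setting for every lattice appearing in this paper.
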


We will use distributive lattices to model the search space of the problems examined in this paper.
For what follows, let $\CC$ be the lattice of all $n$-dimensional non-negative vectors of reals less than or equal to some fixed vector $U \in \mathbb{R}^n_+$, 
where the order is the component-wise $\leq$.
To model a computational problem over the lattice, we can define a predicate $B:\CC\to\{\textsf{T},\textsf{F}\}$ which is true on the set of solutions and false otherwise.
This gives a \emph{predicate detection} problem, where the task is to find a lattice element that satisfies $B$.
Whenever this predicate is \emph{lattice-linear} \cite{chase1998detection,DBLP:conf/spaa/Garg20}, this is simplified.
Specifically, imagine an algorithm that maintains a vector and takes iterative actions based on its value.
Then, each vector coordinate is a \emph{(local) state}, and these states compose together to define the current configuration at any step.
Given $G \in \CC$, the state $G(i)$ (or, equivalently, the index $i$) is \emph{forbidden} whenever any vector $H$ greater than $G$ with $G(i) = H(i)$ is such that $B$ is false in $H$.
A predicate $B$ is {\em lattice-linear}
 if for any $G \in \CC$, $B$ is false in $G$ implies that $G$ contains a
forbidden state.
Intuitively, this means that any algorithm that updates its configuration $G$ simply by increasing the coordinates \emph{must} increase the entries on the forbidden indices in any step if it hopes to find a suitable solution.

This leads to an iterative algorithm, the \emph{LLP algorithm}, defined by the following rule: In each round, if there exists a forbidden state in the current candidate solution $G$, then increase some forbidden state until it is no longer forbidden (i.e. \emph{advanced}).
This continues until there are no more forbidden states or until the top of the lattice is reached.
In the former case, the definition of lattice-linearity certifies that we have found a solution satisfying $B$.
 In the latter case, there is no configuration greater than or equal to $G_0$ satisfying $B$.
See \cite{DBLP:conf/spaa/Garg20} for a more detailed specification and discussion.
Essentially, detecting lattice-linear predicates simplifies to the task of finding forbidden states.
Furthermore, the LLP method is particularly amenable to parallel algorithm design as one can advance multiple forbidden states independently in parallel without harming correctness, see \cite{DBLP:conf/spaa/Garg20} for details.
We use this fact implicitly in our work.
Finally, the following is useful for the analysis of lattice-linear predicates.

\begin{lemma}[\cite{chase1998detection,DBLP:conf/spaa/Garg20}]\label{lem:basic-LLP} 
    Let $B_1,B_2:\CC\to\{\textsf{T},\textsf{F}\}$ be predicates on a distributive lattice $\CC$. If $B_1$ and $B_2$ are lattice-linear, then $B_1 \wedge B_2$ is also lattice-linear.
\end{lemma}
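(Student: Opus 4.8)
The plan is to show directly that if both $B_1$ and $B_2$ are lattice-linear, then any $G \in \CC$ on which $B_1 \wedge B_2$ is false contains a state that is forbidden with respect to $B_1 \wedge B_2$. So suppose $\neg(B_1(G) \wedge B_2(G))$; then at least one of $B_1$, $B_2$ is false in $G$, say without loss of generality $\neg B_1(G)$. Since $B_1$ is lattice-linear, there is an index $i$ that is forbidden in $G$ with respect to $B_1$, i.e. for every $H \geq G$ with $H(i) = G(i)$ we have $\neg B_1(H)$.

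The key observation is that this same index $i$ is forbidden with respect to $B_1 \wedge B_2$: if $H \geq G$ and $H(i) = G(i)$, then $\neg B_1(H)$, hence $\neg(B_1(H) \wedge B_2(H))$. That is exactly the defining implication for $i$ being forbidden in $G$ for the conjunction. Thus every $G$ violating $B_1 \wedge B_2$ has a forbidden state, which is the definition of $B_1 \wedge B_2$ being lattice-linear. The argument is completely symmetric in the case $\neg B_2(G)$, using a state forbidden for $B_2$.

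There is essentially no obstacle here — the proof is a short unfolding of definitions, and the only thing to be careful about is the logical bookkeeping: ``false in $G$'' for the conjunction means at least one conjunct is false, and the forbidden-state implication for a single conjunct immediately upgrades to the conjunction because $\neg B_j(H)$ implies $\neg(B_1(H)\wedge B_2(H))$. I would state it in exactly this order: (1) case split on which conjunct fails at $G$; (2) invoke lattice-linearity of that conjunct to get a forbidden index $i$; (3) verify $i$ is forbidden for the conjunction via the monotone-in-falseness step above; (4) conclude. An easy induction then extends the result to any finite conjunction of lattice-linear predicates, which is the form in which the lemma gets used later in the paper.
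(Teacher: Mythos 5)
Your proof is correct. The paper itself states Lemma~\ref{lem:basic-LLP} without proof, importing it from \cite{chase1998detection,DBLP:conf/spaa/Garg20}, and your argument --- case-split on which conjunct fails, take a forbidden index for that conjunct, and observe that $\neg B_j(H)$ implies $\neg(B_1(H)\wedge B_2(H))$ so the same index is forbidden for the conjunction --- is exactly the standard one-line unfolding of definitions used in those references, so there is nothing to add.
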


Lattice-linearity is intimately related to the structure of a predicate's satisfying preimage, as it is equivalent to these elements possessing a semilattice structure.

\begin{proposition}[\cite{chase1998detection,DBLP:conf/spaa/Garg20}]\label{prop:semilattice-rep}
    Let $B:\CC\to\{\textsf{T},\textsf{F}\}$ be a predicate on a distributive lattice $\mathscr{L}$.
    Then, $B$ is lattice-linear if and only if $B(G)\wedge B(H) \implies B(G\sqcap H),$
    i.e. the set of elements satisfying $B$ is a meet-subsemilattice of $\CC$.
\end{proposition}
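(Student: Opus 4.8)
The claim is the well-known equivalence between lattice-linearity of $B$ and closure of its satisfying set under meet, so the plan is a direct two-way implication argument working straight from the definition of forbidden state.

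For the forward direction, assume $B$ is lattice-linear and suppose $B(G)$ and $B(H)$ hold; I want $B(G \sqcap H)$. Argue by contradiction: if $B(G\sqcap H)$ is false, then by lattice-linearity $G\sqcap H$ contains a forbidden index $i$. Now consider $G$ (or, symmetrically, $H$): since $(G\sqcap H)(i) \le G(i)$, I would like to say one of $G, H$ agrees with $G\sqcap H$ on coordinate $i$. This is the crux: because the meet is componentwise, $(G\sqcap H)(i) = \min(G(i), H(i))$, so coordinate $i$ of $G\sqcap H$ equals either $G(i)$ or $H(i)$ — say $G(i)$. Then $G \ge G\sqcap H$ and $G(i) = (G\sqcap H)(i)$, and since $i$ is forbidden in $G\sqcap H$, the definition of forbidden gives $\neg B(G)$, contradicting $B(G)$. (If instead $(G\sqcap H)(i) = H(i)$, the same argument with $H$ applies.) Hence $B(G\sqcap H)$ holds.

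For the reverse direction, assume the satisfying set is a meet-subsemilattice, and let $G$ be arbitrary with $B(G)$ false; I must exhibit a forbidden index in $G$. If \emph{no} lattice element $H \ge G$ satisfies $B$, then (trivially, since the implication in the definition of forbidden is vacuous over such $H$ for, say, coordinate $1$) index $1$ is forbidden in $G$ — actually more carefully, every index is forbidden in $G$, so we are done. Otherwise, let $S = \{H \in \CC : H \ge G,\ B(H)\}$ be nonempty; by finiteness/completeness of the lattice and closure under meet, $M \triangleq \bigsqcap S$ is well-defined, lies in $S$ (so $B(M)$, $M \ge G$), and is the least satisfying element above $G$. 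Since $B(G)$ is false but $B(M)$ is true, $M \ne G$, so there is a coordinate $i$ with $G(i) < M(i)$. I claim $i$ is forbidden in $G$: take any $H \ge G$ with $G(i) = H(i)$; if $B(H)$ held then $H \in S$, so $M \le H$, forcing $M(i) \le H(i) = G(i) < M(i)$, a contradiction; hence $\neg B(H)$, establishing that $i$ is forbidden.

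The only genuinely delicate point is the existence and membership of the least satisfying element $M$ above $G$ in the reverse direction — this needs the meet of the (possibly infinite) set $S$ to exist in $\CC$ and to itself satisfy $B$. Existence is fine because $\CC$ here is the complete lattice $[0,U] \subseteq \mathbb{R}^n_+$; membership in the satisfying set follows from the hypothesized meet-closure, extended from binary to arbitrary meets (for infinite meets this uses that $\CC$ is complete and that $M$ can be approached as an infimum, or one restricts attention to the finite/discrete setting where binary closure suffices by induction). I would state this extension explicitly or cite the standard fact that a meet-closed subset of a complete lattice is closed under arbitrary meets when it is also, e.g., topologically closed or the lattice is finite. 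Everything else is a routine unwinding of the forbidden-state definition, with the componentwise structure of $\sqcap$ doing the real work in the forward direction.
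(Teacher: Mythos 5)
The paper does not actually prove this proposition; it is imported verbatim from \cite{chase1998detection,DBLP:conf/spaa/Garg20}, so there is no in-paper argument to compare against. Judged on its own merits, your forward direction is correct and is the standard one: since the meet in $\CC$ is componentwise, $(G\sqcap H)(i)$ coincides with $G(i)$ or with $H(i)$, and a forbidden index of $G\sqcap H$ then contradicts $B(G)$ or $B(H)$ directly from the definition of forbidden.

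The reverse direction as written has a gap, and it is exactly the one you flag yourself. Over the paper's lattice $\CC=[0,U]\subseteq\mathbb{R}^n_+$ the set $S=\{H\ge G : B(H)\}$ may be infinite, and closure under binary meets does not yield membership of the infimum of an infinite family (e.g.\ $\{1/k : k\ge 1\}\subseteq[0,1]$ is closed under binary $\min$ but does not contain its infimum $0$); neither of your proposed escape hatches --- finiteness of the lattice or topological closedness of the satisfying set --- is among the hypotheses. The repair is to avoid the least satisfying element entirely: suppose toward a contradiction that no index of $G$ is forbidden, and for each $i\in\{1,\dots,n\}$ pick a witness $H_i\ge G$ with $H_i(i)=G(i)$ and $B(H_i)$. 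The \emph{finite} meet $M=H_1\sqcap\cdots\sqcap H_n$ satisfies $B$ by $n-1$ applications of binary meet-closure, satisfies $M\ge G$, and satisfies $M(i)\le H_i(i)=G(i)$ for every $i$, hence $M=G$ and $B(G)$ holds, contradicting $\neg B(G)$. With that substitution your argument is complete and needs no completeness, induction over infinite meets, or closure assumptions beyond the stated binary one.
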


Now, a state $G(i)$ is \emph{dual-forbidden} whenever $H\leq G$ and $H(i) = G(i)$ implies $\neg B(H)$, and so $B$ is \emph{dual lattice-linear} whenever $\neg B(G)$ implies the existence of a dual-forbidden state.
Hence, a dual lattice-linear predicate is lattice-linear over the dual lattice (given by inverting the ordering relation),
and so Proposition~\ref{prop:semilattice-rep} implies the satisfying elements of $B$ make a semilattice with the join operation of $\CC$.
Keeping this in mind, call a predicate \emph{regular} whenever it is lattice-linear and dual lattice-linear.
Thus, the satisfying elements of any regular predicate $B$ form a sublattice of $\CC$.

As a technical note,  unless $\RP = \NP$ computing forbidden states cannot generally admit a polynomial-time algorithm \cite{kashyap2005intractability}.
So, one assumes the \emph{efficient advancement property}, stating that there exists an efficient procedure to identify and advance some forbidden state.
This is not a strong assumption; all previously examined applications satisfy this property.
In our work, we examine algorithms seeking to satisfy conjunctions of lattice-linear predicates.
In this setting, a strengthened definition can improve complexity bounds in parallel computation settings.

\begin{definition}[Strong Advancement]
    Let $B:\mathscr{L}\to\{\textsf{T},\textsf{F}\}$ be a lattice-linear predicate.
    Then, $B$ has \emph{strong $T$-advancement} if for any $G\in\mathscr{L}$, \emph{every} forbidden state is identified and advanced to a least state in which it is not forbidden in $T$ time.
\end{definition}

This definition says that it is possible to advance \emph{all} forbidden states of any $G$ to satisfy the predicate in time $T$ (whereas efficient advancement only asserts this guarantee for \emph{some} forbidden states).
Take note that every lattice-linear predicate has strong $T$-advancement if one makes $T$ large enough, as the forbidden states and their needed advancements can be identified by computing the minimum satisfying element above a given $G$.
Strong advancement is more notable when examining conjunctions of predicates, as individual $T$-strong advancement guarantees on the conjuncts may be destroyed by conflicting advancements in each step.
Still, in the presence of other properties, strong advancement can imply improved complexity bounds for LLP algorithms operating on conjunctions (see Theorem~\ref{thm:k-trans}).
We show examples of lattice-linear predicates relevant to min-cut computations in the next section.

\subsection{Max-Flows and Min-Cuts}\label{subsec:prelim-mincut}
Let $(V,E)$ be a directed graph over $n+2$ vertices and $m$ edges, with two distinguished vertices $s$ and $t$ called the \emph{source} and \emph{sink}.
We assume that $(V,E)$ is a \emph{flow-network} in the sense that $s$ and $t$ are \emph{terminal}.
This is without loss of generality, as any directed graph can be transformed to satisfy this condition while preserving the cut capacities and flow values by appending terminal vertices to $s$ and $t$ to act as the source and sink.
An \emph{$(s,t)$-cut} $(S,T) \in V\times V$ is a partition of the vertices into disjoint subsets such that $s \in S$ and $t \in T$.
We denote the set of $(s,t)$-cuts by $\mathcal{C}(s,t)$.
For any $(s,t)$-cut $(S,T)$, its \emph{cut-set} is the set of edges directed from $S$ to $T$.
And, given a \emph{capacity function} $c:V\to\mathbb{R}_+$, the capacity of a cut is the sum $\sum_{(u,v)\in (S\times T) \cap E} c(u,v)$ of the capacities of the edges in its cut-set.
Then, a $(s,t)$-cut is \emph{minimum} if its capacity is less than or equal to that of any other $(s,t)$-cut.
In what follows, we call such a cut a \emph{min-cut}.

\begin{figure}[t]
    \centering
    \begin{subfigure}{\linewidth} 
        \centering
        \scalebox{.75}{
    \begin{tikzpicture}
        \node[draw, circle] (s) at (0,0) {$s$};
        \node[draw, circle] (a) at (1.5, 1.5) {$a$};
        \node[draw, circle] (b) at (2.25, 0) {$b$};
        \node[draw, circle] (c) at (1.5, -1.5) {$c$};

        \node[draw, circle] (d) at (3.75, 1) {$d$};
        \node[draw, circle] (e) at (3.75, -1) {$e$};
        \node[draw, circle] (t) at (5.5, 0) {$t$};

        \draw[->] (s) to node[midway, above left] {$2$} (a);
        \draw[->] (s) to node[midway, above] {$1$} (b);
        \draw[->] (s) to node[midway, below left] {$1$} (c);

        \draw[->] (b) to node[midway, right] {$1$} (a);

        \draw[->] (a) to node[midway, above] {$2$} (d);
        \draw[->] (b) to node[midway, below right] {$1$} (d);
        \draw[->] (c) to node[midway, below] {$1$} (e);
        
        \draw[->] (d) to node[midway, above right] {$2$} (t);
        \draw[->] (e) to node[midway, below right] {$1$} (t);
\end{tikzpicture}}
    \caption{A $(s,t)$-flow network.}
    \end{subfigure}
    \begin{subfigure}{\linewidth}
        \centering
        \scalebox{.8}{
            \begin{tikzpicture}
            \tikzstyle{S}=[rectangle, draw=black, rounded corners=5pt]
            \tikzstyle{I}=[rectangle, double, thick, draw=black, rounded corners=5pt]

            \node[S] (s) {$s\mid a\;b\;c\;d\;e\;t$};
            \node[I] (a) [above left=of s,yshift=-10pt,xshift=54pt] {$s\;a \mid b\;c\;d\;e\;t$};
            \node[I] (c) [above right=of s,yshift=-10pt,xshift=-54pt] {$s\;c\mid a\;b\;d\;e\;t$};
            \node[S] (ac) [above=of a,yshift=-10pt] {$s\;a\;c\mid b\;d\;e\;t$};
            \node[I] (ad) [left=of ac,xshift=19pt] {$s\;a\;d\mid b\;c\;e\;t$};
            \node[I] (ab) [above=of c,yshift=-10pt] {$s\;a\;b\mid c\;d\;e\;t$};
            \node[I] (ce) [right=of ab,xshift=-19pt] {$s\;c\;e\mid a\;b\;d\;t$};
            \node[S] (abc) [above=of ac, xshift=-20pt,yshift=-10pt] {$s\;a\;b\;c\mid d\;e\;t$};
            \node[S] (abd) [above=of ab, xshift=-20pt,yshift=-10pt] {$s\;a\;b\;d\mid c\;e\;t$};
            \node[S] (ace) [above=of ce, xshift=-20pt,yshift=-10pt] {$s\;a\;c\;e\mid b\;d\;t$};
            \node[S] (acde) [above=of abc, xshift=-20pt,yshift=-10pt] {$s\;a\;c\;d\;e\mid b\;t$};
            \node[S] (abcd) [above=of abd, xshift=-17pt,yshift=-10pt] {$s\;a\;b\;c\;d\mid e\;t$};
            \node[S] (abce) [above=of ace, xshift=-20pt,yshift=-10pt] {$s\;a\;b\;c\;e\mid d\;t$};
            \node[S] (top) [above=of s, yshift=130pt] {$s\;a\;b\;c\;d\;e\mid t$};
            \node (inv) [above=of s, yshift=150pt] {$~$};

            \foreach \from/\to in {s/a, s/c, a/ad, a/ac, a/ab, c/ce, c/ac, ad/abd, ad/acde, ac/abc, ac/ace, ab/abc, ab/abd, ce/ace, abc/abcd, abc/abce, abd/abcd, ace/acde, ace/abce, acde/top, abcd/top, abce/top}
            \draw (\from) -- (\to);
        \end{tikzpicture}
    }
    \caption{Lattice of min-cuts.}
    \end{subfigure}
    \caption{Above is a flow network and Hasse diagram of its lattice of min-cuts.
        We identify the irreducibles by the nodes with the double borders in the Hasse diagram. 
    }\label{fig:irreducibles}
\end{figure}
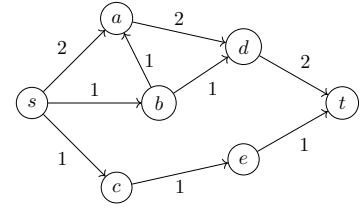
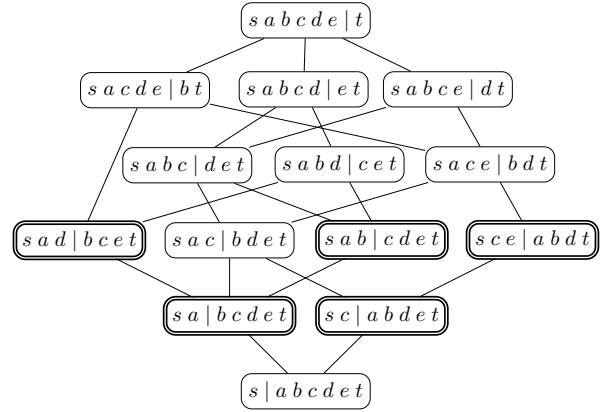

We will use lattice-linear predicate detection to identify min-cuts satisfying additional constraints.
To model this, note that cuts can be ordered by \emph{refinement}, which is to say $(S,T)\leq(S',T')$ if and only if $S\subseteq S'$.
It can be seen then that $(\mathcal{C}(s,t),\leq)$ is in isomorphism with a Boolean algebra over $V\setminus\{s,t\}$ by way of the mapping $(S,T)\mapsto S-s$.
Thus it is a distributive lattice with,
\begin{align*}
    (S,T)\sqcup (S', T') = (S\cup S', V\setminus (S\cup S')),
\end{align*}
and,
\[
(S,T)\sqcap (S',T') = (S\cap S', V\setminus (S\cap S')).
\]
It is known that the min-cuts form a sublattice of this construction \cite{picard1980structure,queyranne1998minimizing}, see Figure~\ref{fig:irreducibles} for example.
We will apply the LLP method to $(\mathcal{C}(s,t),\leq)$.
Here, a vertex $u$ is in some forbidden state, henceforth just forbidden, for the cut $(S,T)$ if there is no greater refining cut satisfying a given predicate such that $u$ remains in the same side of the bipartition.
Then, advancement to a greater refining cut corresponds to moving $u$ from $T$ into $S$.
Interesting examples of lattice-linear predicates follow.

\begin{example}[Implicational Cuts]
    Let $X\subseteq V\setminus\{s,t\}$ be some set of vertices and $u \in V\setminus (X\cup\{s,t\})$.
    Define $B_{X\Rightarrow u}(S,T)\triangleq X\subseteq S \implies u \in S$.
    Observe, $B_{X\Rightarrow u}(S,T)$ is false whenever $X \subseteq S$ but $u\notin S$.
    In this case, $u$ is forbidden and must be advanced by updating to $(S+u,T-u)$.
\end{example}

Note that the implicational cut predicate $B_{X\Rightarrow u}$ has a strong $O(1)$-advancement with $O(n)$ work: One checks each of $X\cup u$ for membership in $S$ in parallel.
Then, if $X\subseteq S$ but $u\notin S$, $u$ is forbidden and advanced by updating $(S,T)$ to $(S+u, T-u)$.
Otherwise, no vertex is forbidden.

We also describe \emph{uniformly directed $(s,t)$-cuts}, i.e. $(S,T)$ with no edge directed from $T$ to $S$, using regular predicates.
Such cuts are examined in prior work \cite{provan1989exact,provan1996paradigm,sigal1980stochastic,avramidis1996integrated}

\begin{example}[Uniformly Directed Cuts]\label{ex:udc}
    For this example, define $B_{(u,v)}(S,T) \triangleq \neg(u \in S \wedge v \in T)$ for each edge $(u,v)\in E$.
    Then, recalling that we assume that $s$ possesses no incoming edges and $t$ possesses no outgoing edges, we see that an $(s,t)$-cut $(S, T)$ is uniformly directed if and only if $\bigwedge_{(u,v)\in E} B_{(u,v)}(S,T)$. 
    However, by applying De Morgan's law and the definition of implication, we also have,
    \begin{multline*}
            \neg(u \in S \wedge v \in T)\equiv u \notin S \vee v \notin T,\\\equiv u \notin S \vee v \in S\equiv \left( \{u\} \subseteq S \implies v \in S \right).
    \end{multline*}
    Therefore, $\bigwedge_{(u,v)\in E} B_{(u,v)} = \bigwedge_{(u,v)\in E} B_{\{u\}\Rightarrow v}$,
    and lattice-linearity follows from Lemma~\ref{lem:basic-LLP}.
    Moreover, the only way $B_{\{u\}\Rightarrow v}$ is not satisfied is when $v \in T$ and $u \in S$.
    Then, $u$ is clearly dual-forbidden since any cut refined by $(S,T)$ will contain $v\in T$.
    So, each $B_{\{u\}\Rightarrow v}$ is also dual lattice-linear, implying that the uniformly directed cuts predicate is regular by Lemma~\ref{lem:basic-LLP} and Proposition~\ref{prop:semilattice-rep}.
\end{example}

In the last example, we described uniformly directed cuts by a conjunction of $m$ implicational cut predicates. 
This representation then has strong $O(1)$-advancement with $O(nm)$ work.
Now, let $B_\text{MC}$ be the predicate which is true on the cut $(S,T)$ if and only if $(S,T)$ is a min-cut (a symbolic definition is not important and omitted for space).
In the next section, we show that $B_\text{MC}$ is regular
by using \emph{max-flows}.
Solving a max-flow efficiently can be done via any of \cite{edmonds1972theoretical,dinitz2006dinitz,goldberg1988new,orlin2013max}, and at the time of writing it appears that the breakthrough $O\left(m^{1+o(1)}\right)$-time achieved by \cite{chen2025maximum} is the best known asymptotic complexity.
The parallel algorithms literature is more sparse; a method for computing exact flows in sublinear parallel time does not seem known, but there are approximation schemes terminating in polylogarithmic time \cite{serna1991tight,agarwal2024parallel}.
We parameterize our time and work complexities in anticipation of future improvements.
Specifically, let $T_\text{MF}$ and $W_\text{MF}$ be the time and work complexities of a parallel max-flow computation over $n$ vertices and $m$ edges.

\section{Min-Cuts are Regular}
\label{sec:mincutLattice}
It is known that the set of min-cuts forms a distributive lattice \cite{picard1980structure,queyranne1998minimizing}, and so regularity itself follows from Proposition~\ref{prop:semilattice-rep}.
However, this does not have immediate algorithmic use, as it does not imply an efficient search method for forbidden vertices.
This is the purpose of this section; all forbidden vertices can be found and advanced in constant parallel time if given a max-flow solution.

Our characterization of forbidden vertices begins with the following:
A cut is a min-cut if and only if there is no unsaturated edge or backward-directed edge with nonzero flow crossing the cut.
Prior work has identified similar statements (as it is an immediate result of linear programming complementary slackness, see \cite{picard1980structure} or Ch. 8 of \cite{korte2008combinatorial} for translation into the language of augmenting paths), so we state without proof.

\begin{lemma}[Folklore]\label{lem:forbidden-characterization}
    Let $(V,E)$ be a $(s,t)$-flow network and $(S,T)$ a $(s,t)$-cut.
    Consider a max-flow $f^* : E \to \mathbb{R}_+$.
    Then, $(S, T)$ is a min-cut if and only if its cut-set is saturated by $f^*$, and edges directed from $T$ to $S$ have flow equal to zero. Symbolically, this is,
    \begin{align*}
        (u,v) \in (S\times T)\cap E\implies f^*(u,v) = c(u,v),
    \end{align*}
    and,
    \[
        (v,u) \in (T \times S) \cap E \implies f^*(v,u) = 0.
    \]
\end{lemma}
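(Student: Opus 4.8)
The plan is to prove Lemma~\ref{lem:forbidden-characterization} directly from the max-flow min-cut theorem together with flow conservation and the definition of flow value across a cut. First I would recall that for any feasible flow $f$ and any $(s,t)$-cut $(S,T)$, the value of $f$ equals the net flow across the cut, namely
\[
    |f| = \sum_{(u,v)\in(S\times T)\cap E} f(u,v) \;-\; \sum_{(v,u)\in(T\times S)\cap E} f(v,u),
\]
which follows by summing the conservation constraints over all vertices in $S$ (a standard telescoping argument). Since $f$ is feasible we have $0 \le f(u,v) \le c(u,v)$ on every edge, so the right-hand side is bounded above by $\sum_{(u,v)\in(S\times T)\cap E} c(u,v) = c(S,T)$, the capacity of the cut, with equality precisely when both $f(u,v)=c(u,v)$ on every forward edge of the cut-set and $f(v,u)=0$ on every backward edge.

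Next I would combine this with the Max-Flow Min-Cut Theorem \cite{ford1956maximal}: the value $|f^*|$ of a max-flow equals $\min_{(S,T)} c(S,T)$. For the forward direction, suppose $(S,T)$ is a min-cut; then $c(S,T) = |f^*|$, and since $|f^*|$ equals the net flow across $(S,T)$ which is at most $c(S,T)$ with the equality characterization above, we must have equality, forcing saturation of the cut-set and zero flow on all $T$-to-$S$ edges. For the converse, suppose $f^*$ saturates the cut-set of $(S,T)$ and assigns zero to every edge from $T$ to $S$; then the net-flow expression gives $|f^*| = c(S,T)$, so $c(S,T)$ equals the max-flow value, which by the theorem is the minimum cut capacity, hence $(S,T)$ is a min-cut.

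There is essentially no hard obstacle here — the lemma is, as the text notes, folklore and an immediate consequence of LP complementary slackness or of the augmenting-path proof of max-flow min-cut. The only point requiring minor care is the cut-value identity: one should be explicit that it holds for \emph{every} feasible flow (not just the maximum one) and that it is derived purely from conservation, since the argument uses it in both directions and at an arbitrary flow. I would state that identity as a one-line sub-claim, cite the standard telescoping derivation, and then the two implications fall out immediately; accordingly the authors reasonably state the result without proof.
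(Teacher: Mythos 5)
Your proof is correct and follows exactly the standard argument the paper alludes to when it states the lemma without proof (net flow across a cut plus the Max-Flow Min-Cut Theorem, i.e.\ the complementary-slackness view). Since the paper deliberately omits a proof, there is nothing further to compare; your write-up would serve as a valid expansion of the folklore claim.
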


As a result, the presence of an unsaturated forward edge or backwards edge with nonzero flow crossing the cut indicates that the least refining min-cut above some (non-minimum) cut cannot be reached through advancements until such an edge no longer crosses the cut.
In the case where there is an unsaturated forward edge $(u,v)$, it follows that $v$ must be added to the $s$-side of the cut.
Likewise, the same is true if $v$ is the start point of a backwards edge with nonzero flow, and so in either case it follows that $v$ is forbidden.
This can be tested for all edges in parallel to find the forbidden vertices with constant overhead, see Algorithm~\ref{alg:forbidden}.
Although we do not show it explicitly, the dual logic applies for the dual-forbidden vertices.
The soundness of this approach is now proven.

\begin{algorithm}[t]
    \caption{Searching for Forbidden Vertices}\label{alg:forbidden}
\DontPrintSemicolon
\SetKwRepeat{Do}{do}{while}
\SetKwInOut{KwInput}{input~}
\SetKwInOut{KwOutput}{output~}
\KwInput{$(s,t)$-network $(V, E)$, capacities $c:E\to\mathbb{R}_+$, max-flow $f^*$, and $(S, T)\in\mathcal{C}(s,t)$.}
\KwOutput{All forbidden vertices of $(S,T)$.}

\vspace{\baselineskip}

$X \gets \varnothing$\;
\;
\ForEach{$(u,v) \in E$ in parallel}
{
    \If{$(u, v) \in T \times S\text{ and } f^*(v,u) \neq 0$}
    {
        $X\gets X+u$
    }
    \ElseIf{$(u,v) \in S\times T\text{ and } f^*(u,v) < c(u,v)$}
    {
        $X \gets X+v$
    }
}
\;
\Return $X$
\end{algorithm}
\begin{theorem}\label{thm:min-cuts-ll}
    Let $(V, E)$ be a $(s,t)$-flow network.
    Then, the min-cut predicate $B_\textup{MC}$ is regular with strong $O(1)$-advancement over $(\mathcal{C}(s,t),\leq)$ if given any max-flow.
\end{theorem}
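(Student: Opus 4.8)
The plan is to prove regularity by exhibiting, via Lemma~\ref{lem:forbidden-characterization}, the forbidden and dual-forbidden states witnessing lattice-linearity and dual lattice-linearity, and then to observe that Algorithm~\ref{alg:forbidden} (and its dual) realizes strong $O(1)$-advancement. Since regularity is defined as ``lattice-linear and dual lattice-linear,'' I would structure the argument as two symmetric halves.

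For the lattice-linear half, I would take any cut $(S,T)$ with $\neg B_\textup{MC}(S,T)$, i.e.\ a non-minimum cut, and fix a max-flow $f^*$. By Lemma~\ref{lem:forbidden-characterization}, failure of minimality means one of two things happens: either there is an edge $(u,v)\in(S\times T)\cap E$ with $f^*(u,v)<c(u,v)$, or there is an edge $(v,u)\in(T\times S)\cap E$ with $f^*(v,u)>0$. In the first case I claim $v$ is forbidden in $(S,T)$; in the second case I claim $u$ (the head of the backward edge, which lies in $S$ --- wait, here the endpoint in $T$ is the relevant one: for $(v,u)\in T\times S$, $v\in T$) --- I claim the $T$-endpoint is forbidden. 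The argument in both cases is the same: for any refining cut $(S',T')\geq(S,T)$ that keeps this vertex on the $T$-side, the offending edge still crosses the cut in the same orientation with the same flow value (the flow $f^*$ is fixed), so by Lemma~\ref{lem:forbidden-characterization} $(S',T')$ is not a min-cut, hence $\neg B_\textup{MC}(S',T')$. That is exactly the definition of that vertex being forbidden. So every non-minimum cut contains a forbidden vertex, giving lattice-linearity; and the advancement is to move that vertex from $T$ into $S$.

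For the dual half, I would run the mirror-image argument: given a non-minimum cut, the same two violation types occur, but now I read off a dual-forbidden vertex on the $S$-side --- namely, for an unsaturated forward edge $(u,v)\in S\times T$, the vertex $u$ must be removed from $S$ in any coarsening min-cut, and for a nonzero backward edge $(v,u)\in T\times S$, the vertex $u\in S$ must be removed; formally, for any $(S',T')\leq(S,T)$ with that vertex still in $S'$, the edge still crosses with the same flow, so $(S',T')$ is not minimum. This shows $B_\textup{MC}$ is dual lattice-linear, and combined with the previous paragraph establishes regularity. (Alternatively one could cite that min-cuts form a sublattice and invoke Proposition~\ref{prop:semilattice-rep} for regularity, but the explicit witnesses are what is needed for the advancement claim, so I would keep them.)

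For the strong $O(1)$-advancement claim, the key observation is that the characterization is \emph{edge-local}: whether a vertex is forbidden depends only on the single edges incident to the cut, checked against the fixed flow $f^*$. Algorithm~\ref{alg:forbidden} inspects every edge in parallel and collects exactly the set $X$ of forbidden vertices in $O(1)$ parallel time with $O(m)$ work; advancing all of them simultaneously means updating to $(S\cup X,\,T\setminus X)$, again $O(1)$ time. I would note that this joint advancement is sound because the LLP framework permits advancing multiple forbidden states in parallel, and because moving a forbidden vertex into $S$ cannot ``un-forbid'' another vertex in a way that breaks correctness --- the least min-cut above $(S,T)$ contains all of $X$ on its $S$-side. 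The only subtlety worth a sentence is that a single round of simultaneous advancement need not reach a min-cut (new edges may now cross the cut unsaturated), so one argues this is one step of strong $O(1)$-advancement, i.e.\ every currently forbidden state is advanced to a least non-forbidden state in $O(1)$ time, not that the whole problem is solved in $O(1)$ time. The main obstacle, such as it is, is stating the ``keeps the vertex on the $T$-side'' monotonicity cleanly so that Lemma~\ref{lem:forbidden-characterization} applies verbatim to the refined cut; everything else is bookkeeping.
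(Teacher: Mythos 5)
Your proposal is correct and follows essentially the same route as the paper's proof: it invokes Lemma~\ref{lem:forbidden-characterization} to locate a violating edge, argues the $T$-endpoint is forbidden (resp.\ the $S$-endpoint is dual-forbidden) because the offending edge persists in any refinement (resp.\ coarsening) that keeps that vertex fixed, and observes that Algorithm~\ref{alg:forbidden} realizes this as one parallel pass over the edges. The brief notational hesitation about which endpoint of the backward edge is forbidden resolves to the right answer, and your added remarks on the soundness of simultaneous advancement are consistent with, though not required by, the paper's argument.
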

\begin{proof}
    Fix a $(s,t)$-cut $(S, T)$, and assume that $(S, T)$ is not a min-cut.
    Let $f^*$ be any max-flow.
    By Lemma~\ref{lem:forbidden-characterization}, there exists an edge $(u,v) \in E$ crossing the cut such that $(u,v)$ is either directed from $T$ to $S$ with positive flow from $f^*$, or is directed from $S$ to $T$ and not saturated by $f^*$.
    We claim that $u$ is forbidden in the former case, and $v$ is forbidden in the latter.
    Suppose $(u,v)$ is directed from $T$ and has positive flow, and
    consider any $(S',T') \geq (S,T)$ such that $u \notin S'$.
    Notice then, $S'\supseteq S$ implies $v \in S$, and so $(u,v)$ also crosses $(S', T')$.
    So, $(u,v)$ is an edge directed from $T'$ to $S'$ with positive flow, implying $(S',T')$ is not a min-cut by Lemma~\ref{lem:forbidden-characterization}.
    Hence $u$ is forbidden.
    Now, assume $(u,v)$ is directed from $S$ to $T$ but is not saturated,
    and examine $(S',T')\geq(S,T)$ with $v \notin S'$.
    Similar to before, our assumptions make $u \in S'$, and so $(u,v)$ crosses $(S',T')$.
    Hence, $(u,v)$ is an edge directed from $S'$ to $T'$ not saturated by $f^*$, so $v$ is forbidden.

    Therefore, finding a forbidden vertices is as easy as computing a max-flow and enumerating the edges.
    Testing an edge is done in constant time after precomputing a max-flow, and this can be done in one parallel step. 
    The symmetric argument shows dual-forbidden vertices (which would be the end of some edge directed from $T$ to $S$ or the beginning of an unsaturated edge directed from $S$ to $T$) can be computed in much the same way.
    Thus, $B_\textup{MC}$ is regular.
\end{proof}

Though similar in nature to typical algorithmic reductions between min-cuts and max-flows, we require the structure given by Lemma~\ref{lem:forbidden-characterization} to diagnose the specific vertices required to advance towards a refining min-cut from \emph{any} graph cut.
Such is necessary for LLP computations in the presence of additional constraints, examined in the next section.

\section{Constraint Satisfying Minimum Cuts}\label{sec:constraint-satisfying-mincuts}

Now, examine the task of identifying a min-cut satisfying a set of constraints.
Although \NP-hard in general, by combining the results of the last section with the fact that lattice-linear predicates are closed under conjunction (see Lemma~\ref{lem:basic-LLP}.2) we can solve this problem via the LLP algorithm whenever the constraints are defined by lattice-linear predicates.
We show this in Section~\ref{subsec:llp-constrained}, and introduce $k$-transition predicates and their application to parallel complexity analysis in Section~\ref{subsec:k-trans}.
With that said, define the \emph{constraint satisfying min-cuts} decision problem as follows.
\begin{quote}
    {\bf Instance}: $(s,t)$-flow network $(V,E)$, capacities $c:E\to\mathbb{Z}$, and $B: \mathcal{C}(s,t)\to\{\textup{\textsf{T}},\textup{\textsf{F}}\}$.

    \vspace{0.125\baselineskip}

    {\bf Question:} Is there a min-cut with $B(S,T) = \textsf{T}$?
\end{quote}

The reader should note that this problem is distinct from the optimization variant asking for the cut of minimum capacity among those satisfying certain constraints.
Instead, this problem asks for a cut satisfying the condition of being a min-cut in addition to other constraints defined by $B$, i.e. the predicate $B_\text{MC} \wedge B$.
By reduction from satisfiability this is \NP-hard.

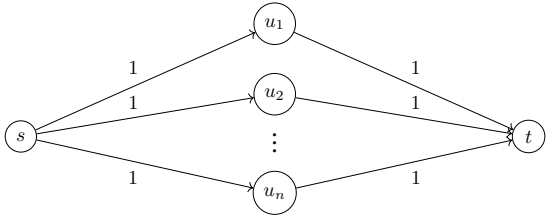
\begin{figure}[t]
\centering
\caption{The graph used in the proof of Theorem~\ref{thm:hardness}.
}\label{fig:hardness}
\scalebox{.75}{\begin{tikzpicture}

    \node[draw, circle] (s) at (0,0) {$s$};

\node[draw, circle] (u1) at (4.5,2) {$u_1$};
\node[draw, circle] (u2) at (4.5,0.75) {$u_2$};
\node at (4.5,-0) {\Large$\vdots$};
\node[draw, circle] (un) at (4.5,-1) {$u_n$};

\node[draw, circle] (t) at (9,0) {$t$};

\draw[->] (s) to node[midway, above left] {$1$} (u1);
\draw[->] (s) to node[midway, above left] {$1$} (u2);
\draw[->] (s) to node[midway, below left] {$1$} (un);

\draw[->] (u1) to node[midway, above right] {$1$} (t);
\draw[->] (u2) to node[midway, above right] {$1$} (t);
\draw[->] (un) to node[midway, below right] {$1$} (t);

\end{tikzpicture}}
\end{figure}

\begin{theorem}\label{thm:hardness}
    The constraint satisfying min-cuts problem is \NP-complete.
\end{theorem}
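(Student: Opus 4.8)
The plan is to show membership in \NP{} first, then hardness by a reduction from a standard \NP-complete problem (Boolean satisfiability, \textsf{SAT}, or more conveniently \textsf{3SAT}). Membership in \NP{} is immediate under the natural assumption that $B$ is polynomial-time evaluable: a certificate is the cut $(S,T)$ itself, and a verifier checks in polynomial time that $(S,T)$ is a min-cut (e.g.\ by computing a max-flow and comparing capacities, or via Lemma~\ref{lem:forbidden-characterization}) and that $B(S,T)=\textsf{T}$.

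For hardness, I would encode a \textsf{SAT} instance $\varphi$ over variables $x_1,\dots,x_n$ into the flow network of Figure~\ref{fig:hardness}: for each variable $x_i$ place a vertex $u_i$ with a unit-capacity edge $s\to u_i$ and a unit-capacity edge $u_i\to t$. Then every edge is saturated by the unique max-flow (value $n$), and by the characterization of the min-cut lattice (the correspondence $(S,T)\mapsto S-s$ with a Boolean algebra), the min-cuts are in bijection with the $2^n$ subsets $A\subseteq\{u_1,\dots,u_n\}$: putting $u_i$ on the $s$-side ``costs'' the edge $u_i\to t$, putting it on the $t$-side ``costs'' $s\to u_i$, so every such cut has capacity exactly $n$ and is minimum. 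Thus a min-cut is precisely a truth assignment, reading $u_i\in S$ as $x_i=\textsf{T}$. Now take $B(S,T)$ to be the predicate ``the assignment $\{x_i : u_i\in S\}$ satisfies $\varphi$''; this is evaluable in time linear in $|\varphi|$, so the constructed instance has polynomial size and $B$ is polynomial-time computable. By the bijection, there is a min-cut with $B(S,T)=\textsf{T}$ if and only if $\varphi$ is satisfiable, completing the reduction.

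The main thing to be careful about — and the only real obstacle — is making sure the problem statement genuinely allows $B$ to be an arbitrary polynomial-time predicate rather than, say, restricting to lattice-linear ones; the hardness lives entirely in the freedom of $B$, so the reduction must present $B$ as (an encoding of) a \textsf{SAT} formula and one should note that this $B$ is in general \emph{not} lattice-linear, which is consistent with the positive algorithmic results elsewhere in the paper. A secondary point is to verify cleanly that \emph{all} cuts of this network are min-cuts — this is where one invokes that the only max-flow saturates every edge together with Lemma~\ref{lem:forbidden-characterization} (no unsaturated forward edge and no backward edge carrying flow can cross any cut, since there are no backward edges at all and every forward edge is saturated), so no cut has a forbidden vertex and every cut is minimum. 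With these two observations in place the equivalence ``min-cut satisfying $B$'' $\iff$ ``satisfying assignment of $\varphi$'' is routine.
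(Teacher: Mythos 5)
Your proposal is correct and follows essentially the same approach as the paper: membership in \NP{} via the cut itself as a certificate, and hardness by reducing from satisfiability using the network of Figure~\ref{fig:hardness}, where min-cuts are in bijection with subsets of $\{u_1,\dots,u_n\}$ and $B$ evaluates the formula on the induced assignment. Your additional observations (that every cut of this network is minimum, and that the constructed $B$ is not lattice-linear) are sound but not needed beyond what the paper states.
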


\begin{proof}
First, note that the problem is in \NP. Given an $(s,t)$-cut, one can easily verify that it is indeed a min-cut and satisfies $B$.
For hardness, reduce from satisfiability: 
Examine a Boolean formula $B'$ over $n$ variables $y_1, \ldots, y_n$.
We construct a directed graph with source $s$ and sink $t$ over $n+2$ vertices by first
labelling the non-terminal vertices as $u_1,u_2, \ldots u_n$ and placing an edge from $s$ to $u_i$ with capacity $1$ for all $i\in\{1,\ldots, n\}$. 
Then, another edge is placed from $u_i$ to $t$ with capacity $1$ for all $i\in\{1,\ldots, n\}$ as well.
See Figure~\ref{fig:hardness}.
Observe, each possible min-cuts corresponds exactly to a subset of $\{u_1,\ldots,u_n\}$.
So, define $B: \mathcal{C}(s,t)\to\{\textup{\textsf{T}},\textup{\textsf{F}}\}$ as $B(S,T) \triangleq B'(\{y_i\mid u_i \in S\})$.
Hence, there exists a min-cut satisfying $B$ if and only if $B'$ is satisfiable.
\end{proof}

\subsection{Solution by an LLP Algorithm}\label{subsec:llp-constrained}
\begin{algorithm}[t]
    \caption{Min-Cuts with Lattice-Linear Predicates}\label{alg:constrained-cuts}
\DontPrintSemicolon
\SetKwRepeat{Do}{do}{while}
\SetKwInOut{KwInput}{input~}
\SetKwInOut{KwOutput}{output~}
\KwInput{$(s,t)$-network $(V, E)$, capacities $c:E\to\mathbb{Z}_+$, lattice-linear $B:(\mathcal{C}(s,t),\leq)\to\{\textsf{T},\textsf{F}\}$.}
\KwOutput{The least min-cut $(S,T)\in(\mathcal{C}(s,t),\leq)$ satisfying $B$ if one exists, and $\bot$ otherwise.}

\vspace{\baselineskip}

$(S,T)\leftarrow (V-t, \{t\})$\;
\;
\Do{$F_\text{MC}\cup F_B \neq \varnothing \text{ and } (S,T)\neq (\{s\},V-s)$}
{
    $F_\text{MC} \gets \texttt{searchForbidden}((S,T), B_\text{MC})$\;
    $F_B \gets \texttt{searchForbidden}((S,T), B)$\;
    $(S,T) \gets \left(S \cup F_\text{MC}\cup F_B, T\setminus (F_\text{MC}\cup F_B)\right)$
}
\;
\lIf{$F_\text{MC}\cup F_B\neq\varnothing$}
{
    \Return $\bot$
}
\lElse
{
    \Return $(X+s, V\setminus X + t)$
}
\end{algorithm}

By Lemma~\ref{lem:basic-LLP} and Theorem~\ref{thm:min-cuts-ll}, the conjunction of $B_\text{MC}$ with a lattice-linear predicate is lattice-linear.
Therefore, in spite of the general hardness, constraint satisfying min-cuts can be solved efficiently via the LLP method if the constraints encode lattice-linear predicates with efficient advancement.
Specifically, we instantiate this in Algorithm~\ref{alg:constrained-cuts} for a lattice-linear predicate $B:(\mathcal{C}(s,t),\leq)\to\{\textsf{T},\textsf{F}\}$.
The algorithm proceeds in rounds,
wherein each round the algorithm searches for forbidden vertices induced by a candidate solution $(S,T)$ against the predicates $B_\text{MC}$ and $B$, distinguished by $F_\text{MC}$ and $F_B$ respectively.
The identified forbidden vertices are advanced by assigning $(S,T)$ to $(S\cup F_\text{MC}\cup F_B,T\setminus(F_\text{MC}\cup F_B))$.
This repeats until are no more forbidden vertices to be found, or until $(S,T)$ equals the top cut.
In the former case, by lattice-linearity $(S,T)$ is a min-cut satisfying the constraints, and so $(S,T)$ is returned.
In the latter case, where $(S,T)$ equals the top cut and there exist forbidden vertices, it follows that no solution can exist and 
so a null value $\bot$ is returned.

Observe, all forbidden vertices induced by $B_\text{MC}$ are computed by Algorithm~\ref{alg:forbidden} in $O(1)$ time and $O(m)$ work after computing a max-flow.
This max-flow can be precomputed once and shared for future searches for forbidden vertices against $B_{MC}$.
Now, let $T_B$ and $W_B$ be the worst-case time and work complexities of computing a forbidden vertex induced by $B$.
There can only be at most $n$ rounds (as the loop must end if the cut is not refined), so we see that the algorithm terminates in $O(T_\text{MF} + nT_B)$ time with $O(W_\text{MF} + n(m + T_B))$ work.

\subsection{$k$-Transition Predicates under Strong Advancement}\label{subsec:k-trans}

For many predicates, Algorithm~\ref{alg:constrained-cuts} will perform better than the coarse guarantees given in the last section.
For example, consider the simple implicational cut predicate $B_{\{u\}\Rightarrow v}$ for fixed vertices $u,v\in V$.
Algorithm~\ref{alg:constrained-cuts} will terminate in 3 rounds:
In the first, $(S,T)$ is advanced to the least min-cut by refinement.
If $u \notin S$ or $v \in S$, then $B_{\{u\}\Rightarrow v}$ is satisfied and the algorithm terminates.
Otherwise, $(S,T)$ is advanced to the least refining cut containing both $u$ and $v$ in $S$.
If the result is a min-cut then the algorithm terminates, and if not, then the algorithm advances on the forbidden vertices induced by $B_\text{MC}$.
The algorithm then finds a solution satisfying $B_\text{MC}\wedge B_{\{u\}\Rightarrow v}$, or that none exist, in the third round.

Underlying this is $B_{\{u\}\Rightarrow v}$ is limited in the number of times it can flip between true and false throughout the computation.
To abstract this idea, we introduce \emph{$k$-transithon predicates}.

\begin{definition}\label{defn:k-transition}
    Let $\mathscr{L}$ be a distributive lattice.
    Then, a predicate $B:\mathscr{L}\to\{\textup{\textsf{T}},\textup{\textsf{F}}\}$ is a \emph{$k$-transition predicate} if and only if, for every chain $\mathbf{x}_1\leq\mathbf{x}_2\leq\ldots$ in $\mathscr{L}$, the count of indices $i$ such that $B(\mathbf{x}_i) \neq B(\mathbf{x}_{i+1})$ is at most $k$.
\end{definition}

Implicational cuts are $2$-transition predicates as their satisfying preimage is the union of an order ideal and filter.
Furthermore, \emph{solitary predicates} \cite{DBLP:conf/icdcn/GargS24}, being predicates satisfied by a unique assignment, are 2-transition predicates, while \emph{stable predicates} \cite{ChanLamp:Snap}, whose satisfying preimage is an order filter, are 1-transition predicates.
Generally, because the length of any strict chain in $(\mathcal{C}(s,t),\leq)$ is bounded by the number of vertices, every predicate over the lattice of $(s,t)$-cuts is a $n$-transition predicate.
With this and strong advancement, we bound the number of rounds in Algorithm~\ref{alg:constrained-cuts} by $O(k)$ for improved complexity guarantees.

\begin{theorem}\label{thm:k-trans}
    Let $(V,E)$ be a $(s,t)$-flow network, and $B:(\mathcal{C}(s,t),\leq)\to\{\textsf{T},\textsf{F}\}$ a lattice-linear predicate.
    Suppose $B$ is a $k$-transition predicate with strong $T_B$-advancement using $W_B$ work. Then, Algorithm~\ref{alg:constrained-cuts} terminates in $O(T_\text{MF} + kT_B)$ parallel time with $O(W_\text{MF} + k(m + W_B))$ work.
\end{theorem}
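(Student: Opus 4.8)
The goal is to bound the number of rounds performed by Algorithm~\ref{alg:constrained-cuts} by $O(k)$; the stated time and work then follow, since after a single $O(T_\text{MF})$-time, $O(W_\text{MF})$-work max-flow precomputation each round costs $O(T_B)$ time and $O(m+W_B)$ work --- the $B_\text{MC}$ part being $O(1)$ time and $O(m)$ work by Theorem~\ref{thm:min-cuts-ll} and Algorithm~\ref{alg:forbidden}, and the $B$ part $T_B$ time and $W_B$ work by hypothesis.

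Let $C_0 < C_1 < \cdots < C_R$ be the chain of cuts visited by the algorithm, where the $i$-th round advances $C_{i-1}$ to $C_i$ and $R$ is the total number of rounds; each round strictly refines the cut until termination, and $C_R$ is a min-cut satisfying $B$ (any dead-end configuration with no satisfying cut above it is detected by strong advancement and returns $\bot$ within $O(1)$ extra rounds). I will use two consequences of strong advancement. First, if $B(C_{i-1})=\textsf{T}$ then no vertex of $C_{i-1}$ is forbidden for $B$ (apply the definition of forbidden with $H=C_{i-1}$), so that round advances only $B_\text{MC}$-forbidden vertices and, by the strong $O(1)$-advancement of $B_\text{MC}$, reaches the least min-cut $\mu_\text{MC}(C_{i-1})$ above $C_{i-1}$ --- and the algorithm halts if $C_{i-1}$ is already a min-cut. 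Second, if $B(C_{i-1})=\textsf{F}$, then by strong $T_B$-advancement the $B$-forbidden vertices of $C_{i-1}$ are advanced to their positions in $\mu_B(C_{i-1})$, the least cut above $C_{i-1}$ satisfying $B$; since that round may also advance $B_\text{MC}$-forbidden vertices, we only conclude $C_i \ge \mu_B(C_{i-1}) > C_{i-1}$, which is all we need.

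The crux is counting the rounds $i$ with $B(C_{i-1})=\textsf{F}$; list them as $i_1<i_2<\cdots<i_p$. By the second consequence, $\mu_B(C_{i_j-1}) \le C_{i_j} \le C_{i_{j+1}-1}$, and since $B(\mu_B(C_{i_j-1}))=\textsf{T}\ne\textsf{F}=B(C_{i_{j+1}-1})$ the cuts $\mu_B(C_{i_j-1})$ and $C_{i_{j+1}-1}$ are distinct, hence $C_{i_j-1} < \mu_B(C_{i_j-1}) < C_{i_{j+1}-1}$. Therefore $C_{i_1-1} < \mu_B(C_{i_1-1}) < C_{i_2-1} < \mu_B(C_{i_2-1}) < \cdots < C_{i_p-1} < \mu_B(C_{i_p-1})$ is a chain of $2p$ distinct elements along which $B$ takes the alternating values $\textsf{F},\textsf{T},\textsf{F},\textsf{T},\ldots,\textsf{F},\textsf{T}$, witnessing $2p-1$ transitions of $B$; since $B$ is a $k$-transition predicate this forces $2p-1 \le k$, i.e. $p \le (k+1)/2$. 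For the rounds with $B(C_{i-1})=\textsf{T}$: by the first consequence, such a round either halts the algorithm or sends a non-min-cut $C_{i-1}$ to the least min-cut $C_i$ above it, and then either $B(C_i)=\textsf{T}$ (so $C_i$ is a min-cut satisfying $B$ and the algorithm halts after round $i$) or round $i+1$ is one of the $p$ false rounds. Hence every true round but at most the final one or two is immediately followed by a distinct false round, so there are at most $p+2$ true rounds, and $R \le 2p+2 = O(k)$.

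I expect the main obstacle to be pinning down the two consequences of strong advancement --- in particular that one round reaches $\mu_\text{MC}(C_{i-1})$ (resp. $\mu_B(C_{i-1})$) rather than merely one step of advancement, and that advancing $B_\text{MC}$- and $B$-forbidden vertices simultaneously in a false round still leaves $C_i \ge \mu_B(C_{i-1})$ (which is immediate, as we only enlarge the $S$-side). The remaining ingredients --- the chain-counting inequality $2p-1\le k$, the accounting of true versus false rounds, the $\bot$ corner case, and the per-round cost tally --- are routine.
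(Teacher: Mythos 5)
Your proof is correct and follows essentially the same route as the paper's: both bound the number of rounds by $O(k)$ by interleaving the visited configurations with the least $B$-satisfying cuts produced by strong advancement, using the $k$-transition property to cap the number of rounds in which $B$ is false at roughly $k/2$, and then charging each remaining round to an adjacent false round or to termination. The differences are cosmetic --- slightly different constants ($2p+2$ versus $k+1$) and a somewhat more explicit treatment of chain strictness and the $\bot$ corner case on your part.
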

\begin{proof}
    We argue Algorithm~\ref{alg:constrained-cuts} terminates in $O(k)$ rounds.
    Then, the time bound follows from the overhead of computing forbidden vertices for $B_\text{MC}$ being constant after precomputing a max-flow, while work follows from noting Algorithm~\ref{alg:forbidden} performs tests using constant operations for each edge.

    Let $\left(S^{(t)},T^{(t)}\right)$ be the value of $(S,T)$ at the end of the $t^\text{th}$ round, with $\left(S^{(0)},T^{(0)}\right)$ the bottom cut by convention, while $F_\text{MC}^{(t)}$ and $F_B^{(t)}$ are the forbidden vertices corresponding to $B_\text{MC}$ and $B$, respectively.
 Suppose that the computation terminates in $\eta$ rounds.
    By strong advancement,
    \begin{equation}\label{eq:k}
        B\left(S^{(t-1)}\cup F_B^{(t)},T^{(t-1)}\setminus F_B^{(t)}\right) = \textsf{T},
    \end{equation}
    for all $t \in \{1,\ldots,\eta\}$.
    And, by the update rules in Algorithm~\ref{alg:constrained-cuts}, we also observe the chain,
    \begin{multline*}
        \left(S^{(0)}\cup F_B^{(1)},T^{(0)}\setminus F_B^{(1)}\right) \leq \left(S^{(1)},T^{(1)}\right),
        \leq \ldots,\\
        \leq \left(S^{(\eta -1)}\cup F_B^{(\eta)},T^{(\eta-1)}\setminus F_B^{(\eta)}\right) \leq \left(S^{(\eta)},T^{(\eta)}\right).
    \end{multline*}
    By assumption, there are at most $k$ transitions between true and false (and vise versa) when applying $B$ to each member of this chain.
    Now, if $F^{(t)}_B \neq \varnothing$ then $B\left(S^{(t-1)},T^{(t-1)}\right)$ is false.
    So, by applying this insight and Eq.~\ref{eq:k} to our chain,
    \[
        \left|\left\{t \in \{1,\ldots,\eta\}\Bigm\vert F^{(t)}_B\neq\varnothing\right\}\right| \leq \frac{k}{2}.
    \]
    Furthermore, $F^{(t)}_B = \varnothing$ implies $\left(S^{(t)},T^{(t)}\right)$ is a min-cut by the strong advancement of $B_\text{MC}$.
    Then, unless $\left(S^{(t)},T^{(t)}\right)$ satisfies $B$ as well, in which case $i = \eta$ as the computation terminates once there are no more forbidden vertices to be found, we will have $F^{(t+1)}_B \neq \varnothing$.
    It follows that,
    \begin{multline*}
        \left|\left\{t \in \{1,\ldots,\eta\}\Bigm\vert F^{(t)}_B = \varnothing\right\}\right|,\\ \leq \left|\left\{t \in \{1,\ldots,\eta\}\Bigm\vert F^{(t)}_B\neq\varnothing\right\}\right| + 1 \leq \frac{k}{2} + 1.
    \end{multline*}
    So $\eta \leq k + 1$, and Algorithm~\ref{alg:constrained-cuts} terminates in $O(k)$ rounds.
\end{proof}

One can also bound the impact of conjunction on $k$.

\begin{lemma}\label{lem:k-trans-conj}
    Let $\mathscr{L}$ be a distributive lattice and $B_1,B_2:\mathscr{L}\to\{\textsf{T},\textsf{F}\}$ predicates.
    Suppose $B_1$ and $B_2$ are $k_1$ and $k_2$-transition predicates, respectively.
    Then $B_1\wedge B_2$ is a $(k_1+k_2)$-transition predicate.
\end{lemma}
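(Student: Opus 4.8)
The plan is to work directly from Definition~\ref{defn:k-transition} by fixing an arbitrary chain $\mathbf{x}_1 \leq \mathbf{x}_2 \leq \cdots$ in $\mathscr{L}$ and bounding the number of indices $i$ at which $(B_1\wedge B_2)(\mathbf{x}_i) \neq (B_1\wedge B_2)(\mathbf{x}_{i+1})$. The key observation is that whenever the truth value of the conjunction flips between consecutive chain elements $\mathbf{x}_i$ and $\mathbf{x}_{i+1}$, at least one of $B_1$ or $B_2$ must also flip between those two elements: if neither $B_1$ nor $B_2$ changed truth value from $\mathbf{x}_i$ to $\mathbf{x}_{i+1}$, then the conjunction could not have changed either. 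Symbolically, the set of ``transition indices'' of $B_1\wedge B_2$ along the chain is contained in the union of the transition-index sets of $B_1$ and of $B_2$.

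From here the counting is immediate: by hypothesis the transition-index set of $B_1$ along this chain has size at most $k_1$, and that of $B_2$ has size at most $k_2$, so their union has size at most $k_1+k_2$. Hence the number of transitions of $B_1\wedge B_2$ along the chain is at most $k_1+k_2$. Since the chain was arbitrary, $B_1\wedge B_2$ is a $(k_1+k_2)$-transition predicate.

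The only subtlety worth stating carefully is the containment of transition sets, and it really is just the contrapositive of the fact that $(B_1\wedge B_2)(\mathbf{x}_i) = B_1(\mathbf{x}_i)\wedge B_2(\mathbf{x}_i)$ is a function of the pair $(B_1(\mathbf{x}_i),B_2(\mathbf{x}_i))$: if that pair is unchanged, so is its image. I do not anticipate any real obstacle here; the statement is essentially a union bound on the sets of ``change points'' of two Boolean-valued functions restricted to a common chain, and the monotonicity/lattice structure plays no role beyond Definition~\ref{defn:k-transition} already being phrased in terms of chains. (One could equally phrase the whole argument without chains by noting the analogous fact holds restricted to any totally ordered subset, but since the definition is chain-based, the cleanest route is to stay within a fixed chain throughout.)
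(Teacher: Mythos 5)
Your proposal is correct and matches the paper's proof: both arguments show that any index where $B_1\wedge B_2$ flips must be an index where $B_1$ or $B_2$ flips, and then bound the transitions by $k_1+k_2$ via a union bound over the chain. The paper phrases this as a two-case analysis on the truth value of the conjunction at $\mathbf{x}_i$, while you use the contrapositive, but the content is identical.
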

\begin{proof}
    Fix any chain $\mathbf{x}_1\leq \mathbf{x}_2\leq \ldots$, and examine an $i$ with $(B_1\wedge B_2)(\mathbf{x}_i) \neq (B_1\wedge B_2)(\mathbf{x}_{i+1})$.
    Assume $(B_1\wedge B_2)(\mathbf{x}_i)$ is false.
    Hence, at least one of $B_1(\mathbf{x}_i)$ or $B_2(\mathbf{x}_i)$ is false, while both $B_1(\mathbf{x}_{i+1})$ and $B_2(\mathbf{x}_{i+1})$ are true.
    So, at least one of $B_1$ or $B_2$ transition from true to false between $\mathbf{x}_i$ and $\mathbf{x}_{i+1}$.
    The case where $(B_1\wedge B_2)(\mathbf{x}_i)$ is true follows from the symmetric argument.
    Therefore, $B_1\wedge B_2$ is a $(k_1+k_2)$-transition predicate.
\end{proof}

Unfortunately, there exist predicates eluding a description as a $k$-transition predicate for a useful value of $k$.
For example, uniformly directed cuts are a $\Theta(n)$-transition predicate:
Consider a graph over the vertices $s,v_1,\ldots,v_n,t$ with,
$$E = \{(v_i, v_{i+1})\mid i < n\}\cup\{(s,v_1),(v_n,t)\}.$$
Then define,
\begin{equation*}
    X_i \triangleq \begin{cases}
        \{s,v_2\},\quad&\text{if }i = 1,\\
        \{s,v_1,v_2\},\quad&\text{if } i = 2,\\
        \{s,v_1,\ldots,v_i\}\quad&\text{if } i>2\text{ and }i\text{ is even},\\
        \{s,v_1,\ldots,v_{i-1},v_{i+1}\},\quad&\text{otherwise},
    \end{cases}
\end{equation*}
and observe the strict chain $(X_1,V\setminus X_1) \leq \ldots \leq (X_n,V\setminus X_n)$ of $(s,t)$-cuts. By our construction, it can be seen that $(X_i,V\setminus X_i)$ is uniformly directed if and only if $i$ is even, and so any predicate describing uniformly directed cuts transitions between true and false $n-1$ times on this chain.

\section{Applications of the LLP Method}\label{sec:applic}

Now we apply our method to certain tasks.
First, in Section~\ref{subsec:irreducibles} we model the irreducible min-cuts as lattice-linear predicates with efficient advancement.
Algorithm~\ref{alg:constrained-cuts} can then be applied to construct a succinct representation of the lattice of min-cuts, and using Theorem~\ref{thm:k-trans} we show this is constructed in parallel time equal to that of computing a max-flow.
Moreover, these ideas extend to a conjunction with any regular predicate, allowing us to use this representation to enumerate min-cuts satisfying regular predicates in Section~\ref{subsec:enumeration}.
This of course gives an enumeration algorithm for min-cuts, as well as the uniformly directed cuts (see Example~\ref{ex:udc}) considered in \cite{provan1989exact, provan1996paradigm}.
We then conclude our work by combining these insights with poset slicing \cite{Garg06-tcs,garg2001slicing} to give a heuristic for min-cuts satisfying more general constraint structures.
\subsection{Computing Irreducibles}\label{subsec:irreducibles}

In Figure~\ref{fig:irreducibles}, the irreducible min-cuts are those $(S,T)$ for which there exists $u\in V\setminus\{s,t\}$ where $(S,T)$ is the least min-cut, by refinement, with $u \in S$.
This is true in general, as explained by the following corollary of Birkhoff's theorem.
\begin{lemma}[Folklore]\label{lem:moore}
    Let $\Sigma$ be a finite set, and $\mathcal{F}\subseteq 2^\Sigma$ a ring of sets.
    Then, $X\in \mathcal{F}$ is irreducible if and only if there exists $y \in X$ such that $Z\subseteq X$ implies $y \notin Z$ or $Z\notin \mathcal{F}$.
\end{lemma}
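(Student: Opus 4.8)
\textbf{Proof plan for Lemma~\ref{lem:moore}.}
The statement is a purely order-theoretic fact about rings of sets, so the plan is to unwind the definition of irreducibility in $(\mathcal{F},\subseteq)$, where the join is union and the meet is intersection. I would argue both directions by contraposition.

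For the forward direction, suppose $X \in \mathcal{F}$ is such that for \emph{every} $y \in X$ there is some $Z_y \in \mathcal{F}$ with $Z_y \subseteq X$ and $y \in Z_y$. Since $\mathcal{F}$ is closed under (finite) union, the set $W \triangleq \bigcup_{y \in X} Z_y$ lies in $\mathcal{F}$. Each $Z_y \subseteq X$ gives $W \subseteq X$, while $y \in Z_y \subseteq W$ for every $y \in X$ gives $X \subseteq W$; hence $W = X$. If $X = \varnothing$ the claim is vacuous (the bottom of a lattice is conventionally not irreducible, or if it is, $X$ has no element $y$ and the ``only if'' side is then what needs care — I'd note that the intended reading excludes $\varnothing$, consistent with the $\CC$ having a bottom that is the empty join). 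Otherwise pick any $y_0 \in X$; then $X = Z_{y_0} \sqcup \bigl(\bigcup_{y \neq y_0} Z_y\bigr)$, and if $X$ were irreducible one of these two joinands would equal $X$. If $\bigcup_{y\neq y_0} Z_y = X$ we can iterate/relabel; the clean way is: the family $\{Z_y : y \in X\}$ consists of members of $\mathcal{F}$ strictly below $X$ (each $Z_y \subsetneq X$ since otherwise $Z_y=X$ contradicts nothing — here is the subtlety, see below) whose join is $X$, and a finite join of elements each $\neq X$ cannot equal an irreducible. So $X$ is not irreducible.

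For the converse, suppose $X$ is not irreducible, so $X = A \sqcup B = A \cup B$ with $A, B \in \mathcal{F}$ and $A \neq X \neq B$, hence $A \subsetneq X$ and $B \subsetneq X$. Take any $y \in X$; then $y \in A$ or $y \in B$, so there is $Z \in \{A,B\} \subseteq \mathcal{F}$ with $Z \subseteq X$ and $y \in Z$. This contradicts the right-hand condition for that $y$, so the right-hand condition fails for every $X$ that is not irreducible — equivalently, the right-hand condition implies irreducibility.

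\textbf{Main obstacle.} The genuinely delicate point is the degenerate case and the handling of the witnesses $Z_y$: a priori $Z_y$ could equal $X$ itself (the condition only demands $Z_y \subseteq X$, $y \in Z_y$, $Z_y \in \mathcal{F}$), in which case the forward argument collapses. The fix is to read the lemma's condition as its negation should be read: $X$ is \emph{reducible} iff there exists $y \in X$ such that every $Z \in \mathcal{F}$ with $Z \subseteq X$ and $y \in Z$ is still a proper subset witnessing a decomposition — but that is circular. Cleaner: I would instead observe directly that $X$ is join-irreducible in the ring of sets iff $X$ has a \emph{unique} minimal element-covering witness, i.e.\ iff $X = \bigcap\{Z \in \mathcal{F} : y \in Z\}$ for some $y \in X$ (the least member of $\mathcal{F}$ containing $y$), and that this least-set characterization is exactly ``$Z \subseteq X, Z \in \mathcal{F} \Rightarrow y \notin Z$ fails only at $Z \supseteq X$.'' Phrased contrapositively as in the lemma statement: $X$ is irreducible iff $\exists y \in X$ with no \emph{proper} $\mathcal{F}$-subset of $X$ containing $y$, which is what the displayed implication ``$Z \subseteq X \Rightarrow y \notin Z$ or $Z \notin \mathcal{F}$'' says once one reads $Z$ as ranging over proper subsets (or notes $Z=X$ is excluded by $X$ being the thing we test). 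I would make this reading explicit in one sentence and then the two contrapositive arguments above go through cleanly using only closure under union (forward) and the definition of $\sqcup$ (converse); closure under intersection is not even needed, matching the ``ring of sets'' hypothesis being slightly stronger than necessary.
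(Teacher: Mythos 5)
The paper gives no proof of Lemma~\ref{lem:moore} at all --- it is stated as folklore and attributed to Birkhoff's theorem --- so there is nothing to compare against line by line; your argument must stand on its own, and it essentially does. Your two contrapositive directions are the canonical proof: if every $y\in X$ lies in some proper $\mathcal{F}$-subset $Z_y\subsetneq X$, then $X=\bigcup_{y\in X}Z_y$ exhibits $X$ as a finite join of elements strictly below it, contradicting join-irreducibility; conversely a nontrivial decomposition $X=A\cup B$ hands every $y\in X$ a proper witness. Only closure under union is used, as you note. The one substantive point is the reading of ``$Z\subseteq X$'': taken literally, $Z=X$ always satisfies $Z\subseteq X$, $y\in Z$, $Z\in\mathcal{F}$, so the right-hand side of the lemma would be unsatisfiable and the statement vacuously false. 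You correctly identify that the intended quantification is over \emph{proper} subsets $Z\subsetneq X$, and the paper's own subsequent use of the lemma (``there exists no min-cut $(S',T')<(S,T)$ with $v\in S'$'', with strict inequality) confirms this reading; your proof is sound once that is fixed. Your aside about the bottom element is also the right caveat --- under the paper's binary definition of irreducible the bottom qualifies, whereas the lemma's condition excludes $\varnothing$, which is the standard Birkhoff convention and harmless for the application. I would only trim the exploratory hedging in your forward direction (the ``iterate/relabel'' digression): the clean statement you land on, that a finite join of elements each distinct from a join-irreducible cannot equal it, is all that is needed and follows by a one-line induction on the number of joinands.
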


Recall $(\mathcal{C}(s,t),\leq) \cong \left(2^{V\setminus\{s,t\}},\subseteq\right)$ via $(S,T)\mapsto S-s$.
So, the lattice of min-cuts is isomorphic to a sublattice of $\left(2^{V\setminus\{s,t\}},\subseteq\right)$, which is some ring of sets.
Applying the previous lemma and the inverse isomorphism shows that the irreducible min-cuts are exactly the min-cuts $(S,T)$ such that for some vertex $v \in S-s$ there exists no min-cut $(S',T') < (S,T)$ with $v \in S'$.
Because of the correspondence between the satisfying preimage of regular predicates and sublattices, this logic extends to any regular predicate (and conjunction thereof) as well.
In summary, for a regular predicate $B:\mathcal{C}(s,t)\to\{\textsf{T},\textsf{F}\}$, the irreducibles of the sublattice of cuts satisfying $B_\text{MC}\wedge B$ is given exactly by,
\begin{equation*}
    \bigcup_{v \in V\setminus\{s,t\}}\left(\bigsqcap\{(S,T)\in\mathcal{C}(s,t)\mid v \in S \wedge (B_\text{MC}\wedge B)(S,T)\}\right).
\end{equation*}
\vspace{-.8\baselineskip}
\begin{algorithm}[t]
    \caption{Computing Irreducible Min-Cuts}\label{alg:irreducibles}
\DontPrintSemicolon
\SetKwRepeat{Do}{do}{while}
\SetKwInOut{KwInput}{input~}
\SetKwInOut{KwOutput}{output~}
\KwInput{$(s,t)$-network $(V, E)$, capacities $c:E\to\mathbb{Z}_+$, and regular  $B:(\mathcal{C}(s,t),\leq)\to\{\textsf{T},\textsf{F}\}$.}
\KwOutput{The set of irreducible min-cuts satisfying $B$.}

\vspace{\baselineskip}

$\mathcal{I} \gets \varnothing$\;
$f^*\gets \texttt{computeMaxFlow}(V,E,c)$\;
\;
\ForEach{$u \in V\setminus\{s,t\}$ in parallel}
{
    $(S,T) \gets \texttt{leastSatMinCut}(f^*,B\wedge B_{\varnothing\implies u})$\;
    $\mathcal{I} \gets \mathcal{I} + (S,T)$
}
\;
\Return $\mathcal{I}$
\end{algorithm}

Observe, for any $v \in V\setminus\{s,t\}$, the least min-cut $(S,T)$ satisfying $B$ with $v \in S$ is exactly the least cut (with respect to refinement) satisfying the conjunction $B_\text{MC}\wedge B\wedge B_{\varnothing \implies v}$ with an implicational cut predicate.
Therefore, we can build the set of irreducibles by independently applying Algorithm~\ref{alg:constrained-cuts} to $B\wedge B_{\varnothing \implies v}$, for each $v \in V\setminus\{s,t\}$.
Such a procedure is shown in Algorithm~\ref{alg:irreducibles}.
Specifically, we precompute a max-flow and then perform $n$ parallel invocations to Algorithm~\ref{alg:constrained-cuts}.
The resulting min-cuts are collected into a set $\mathcal{I}$ and returned.

To compute forbidden vertices in $B\wedge B_{\varnothing\Rightarrow u}$, one uses separate search procedures on $B$ and $B_{\varnothing\Rightarrow u}$.
As the latter is an implicational cut predicate, it has (strong) $O(1)$-advancement with $O(n)$ work.
Therefore, if $B$ has $T_B$-advancement with $W_B$ work, by the analysis in Section~\ref{subsec:llp-constrained} the procedure uses in $O(T_\text{MF} + n T_B)$ time and $O(W_\text{MF} + n^2(m + n + W_{B}))$ work.

If $B$ is a $k$-transition predicate with strong $T_B$-advancement, then the situation improves.
Specifically, from Lemma~\ref{lem:k-trans-conj} it follows that $B\wedge B_{\varnothing\Rightarrow u}$ is a $(k+2)$-transition predicate with strong $T_B(n,m)$-advancement.
And so, by Theorem~\ref{thm:k-trans} the procedure terminates in $O(T_\text{MF} + kT_B)$ time with $O(W_\text{MF} + kn(m + n + W_B))$ work.
Interestingly, if we are only concerned with irreducible min-cuts, effectively making $B$ equal to a predicate which is always true, then the complexity collapses further.
This takes $O(T_\text{MF})$ time with $O(W_\text{MF} + n(m+n))$ work.
Hence, all irreducible min-cuts can be computed in parallel time equal to that of computing a max-flow.

\subsection{Enumerating Min-Cuts Satisfying Regular Predicates}\label{subsec:enumeration}

Now we enumerate the min-cuts satisfying a regular predicate $B$.
The idea will be to precompute the irreducibles via Algorithm~\ref{alg:irreducibles} to then traverse the ideals of the irreducibles using a recursive depth-first search.
Each listed min-cut is a join over an ideal, and so the correctness of this approach is a result of Birkhoff's theorem \cite{birkhoff1937rings}.
A strength of our approach is its generality, as using the lattice structure of the min-cuts allows one to apply our method to enumerate min-cuts satisfying \emph{any} regular predicates.
This allows one to enumerate uniformly directed cuts \cite{provan1989exact,provan1996paradigm} without rederiving any new results, for example.
Before continuing, another method for enumerating the ideals of a poset is presented in \cite{steiner1986algorithm}.
Though their method uses less operations ($O(n)$ vs. $O(n^2)$), our method is simpler and easily parallelized (whereas theirs relies on sequential backtracking procedures).
Other enumeration algorithms for distributive lattices are given in Ch. 14 of \cite{Gar:2015:bk} under the assumption that the lattice is encoded by a \emph{realizer} (see \cite{Trotter92}).

\begin{algorithm}[t]
    \caption{Enumerating Min-Cuts}\label{alg:enumeration}
\DontPrintSemicolon
\SetKwRepeat{Do}{do}{while}
\SetKwInOut{KwInput}{input~}
\SetKwInOut{KwOutput}{output~}
\SetKwFunction{flistNext}{listNext}
\SetKwFunction{firr}{irreducibleMinCuts}
\SetKwFunction{fleast}{leastCuts}
\SetKwProg{fn}{func}{:}{}
\KwInput{$(s,t)$-network $(V, E)$, capacities $c:E\to\mathbb{Z}_+$, and regular $B:(\mathcal{C}(s,t),\leq)\to\{\textsf{T},\textsf{F}\}$.}
\KwOutput{All min-cuts satisfying $B$.}

\vspace{\baselineskip}

$f^*\gets \texttt{computeMaxFlow}(V,E,c)$\;
$(S_\text{bot}, T_\text{bot})\gets \texttt{leastSatMinCut}(f^*,B)$\;
\texttt{listNext}$\left(\{(S_\textup{bot}, T_\textup{bot})\}, \texttt{irreducibles}\left(f^*,B\right)\right)$\;
\;
\fn{\textup{\texttt{listNext}}$\left(\mathcal{K}, \mathcal{I}\right)$}
{
    \textbf{list} $\bigsqcup \mathcal{K}$\;
    \;
    \mbox{$\mathcal{X}\gets\{(S,T)\in\mathcal{I}\mid (\nexists (S',T')\in\mathcal{I})\; (S',T')<(S,T)\}$}\par\nobreak
    \ForEach{$(S,T) \in \mathcal{X}$ in parallel}
    {
        $\mathcal{U}_{(S,T)} \gets \{(S',T'\in\mathcal{I}\mid (S',T') \geq (S,T)\}$\;
    }
    \;
    \ForEach{$\mathcal{L} \subseteq \mathcal{X}$}
    {
        $\mathcal{J} \gets \mathcal{I}\setminus \bigcup_{(S,T) \in \mathcal{X}\setminus\mathcal{L}}\mathcal{U}_{(S,T)}$\;

        \texttt{listNext}$\left(\mathcal{K}\cup\mathcal{L}, \mathcal{J}\right)$\;
    }
}
\end{algorithm}

Algorithm~\ref{alg:enumeration} begins by using a max-flow solution to precompute the irreducibles and least element of the sublattice of min-cuts satisfying $B$.
Then, in each round of recursive call to the function \texttt{listNext}, the min-cut defined by the join $\bigsqcup \mathcal{K}$ is first listed.
The other parameter $\mathcal{I}$ can be understood as the operating set of irreducibles allowed for use in future recursive calls.
Next, the least cuts from $\mathcal{I}$ are computed to form the set $\mathcal{X}$ in line 8.
The least cuts are used to incrementally advance the ideal used to list the cuts.
This is, for each cut $(S,T)\in\mathcal{X}$ we compute the principal filter of $(S,T)$ for storage as $\mathcal{U}_{(S,T)}$ in lines 9-10.
Then, for each subset $\mathcal{L}$ of $\mathcal{X}$ we recurse on $\mathcal{K}\cup \mathcal{L}$, so the next listed cut is that given by the join of $\mathcal{K}\cup\mathcal{L}$, and the set of irreducibles $\mathcal{J}$ formed by subtracting the principal filters of irreducibles in $\mathcal{X}\setminus \mathcal{L}$ from $\mathcal{I}$ in lines 12-14.
The latter mechanism is our means of ensuring that the procedure lists each cut no more than once (needed for correctness).
This is more clear in the proof of Theorem~\ref{thm:enum-correct}.

We examine the complexity per listed element (take note that the number of listed cuts will generally be exponential in $n$ and $m$).
In the literature this is referred to as the \emph{delay}, see \cite{johnson1988generating} for discussion.
First, observe that Lemma~\ref{lem:moore} ensures there are $O(n)$ irreducibles.
So, the join in line 6 is concretely computed as a union over $O(n)$ sets (see Section~\ref{sec:preliminaries}), and thus takes $O(\log n)$ time and $O(n^2)$ work using a reduce algorithm.
Lines 9-10 are computed in $O(1)$ time with $O(n^2)$ work.
Finally, line 13 again requires a union over $O(n)$ sets, so again takes $O(\log n)$ time and $O(n^2)$ work.
So, in total, Algorithm~\ref{alg:enumeration} takes $O(\log n)$ time and $O(n^2)$ work per listed element.
Also, note that the depth of recursion is bounded by the number of irreducibles, which is $O(n)$ by Lemma~\ref{lem:moore}.

Now we argue correctness.
We first show the invariant that $\mathcal{K}$ is an ideal in every invocation of \texttt{listNext}.
Following this, we prove each needed min-cut is listed exactly once.

\begin{lemma}\label{lem:ideal}
    Let $(V,E)$ be a $(s,t)$-flow network, and $B:(\mathcal{C}(s,t),\leq)\to\{\textsf{T},\textsf{F}\}$ a regular predicate.
    Then, each invocation of $\textup{\texttt{listNext}}(\mathcal{K},\mathcal{I})$ is such that $\mathcal{K}$ is an ideal of poset of irreducibles of the sublattice of min-cuts satisfying $B$.
\end{lemma}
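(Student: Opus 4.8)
The claim is an invariant maintained by the recursion, so the natural approach is induction on the depth of recursion (equivalently, on the nesting of calls to \texttt{listNext}). I would set up the statement carefully first: we want to show that in every invocation \texttt{listNext}$(\mathcal{K},\mathcal{I})$, the set $\mathcal{K}$ is an order ideal of the poset $\mathcal{P}_{\mathrm{irr}}$ of irreducibles of the sublattice $\mathscr{S}$ of min-cuts satisfying $B$. It will also be convenient to carry along a second invariant, namely that $\mathcal{I}$ is exactly the set of irreducibles \emph{not} below any element of $\mathcal{K}$ (i.e.\ $\mathcal{I} = \mathcal{P}_{\mathrm{irr}} \setminus \{\mathbf{i} : \mathbf{i} \le \bigsqcup\mathcal{K}\}$), since the recursive update to $\mathcal{I}$ only makes sense relative to this, and the ideal property of $\mathcal{K}\cup\mathcal{L}$ depends on knowing which irreducibles $\mathcal{L}$ is drawn from. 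I would phrase the induction to prove both simultaneously.

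\textbf{Base case.} The top-level call is \texttt{listNext}$\bigl(\{(S_{\mathrm{bot}},T_{\mathrm{bot}})\}, \texttt{irreducibles}(f^*,B)\bigr)$. Here $(S_{\mathrm{bot}},T_{\mathrm{bot}})$ is the least element of $\mathscr{S}$, which by Lemma~\ref{lem:moore} is itself irreducible (it is the bottom of the ring of sets), and $\{(S_{\mathrm{bot}},T_{\mathrm{bot}})\}$ is trivially a down-set of $\mathcal{P}_{\mathrm{irr}}$ since nothing lies strictly below the bottom element. For the companion invariant: $\texttt{irreducibles}(f^*,B)$ returns all irreducibles of $\mathscr{S}$, and every irreducible is $\ge$ the bottom element, so $\mathcal{I}$ is (vacuously, up to the bottom element itself, which one should handle as a boundary case) the set of irreducibles not strictly below $\bigsqcup\mathcal{K}$; I would just be slightly careful about whether the bottom irreducible is included in $\mathcal{I}$ and check that the code's use of ``$<$'' versus ``$\le$'' makes this consistent.

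\textbf{Inductive step.} Assume \texttt{listNext}$(\mathcal{K},\mathcal{I})$ satisfies both invariants. The recursive calls are \texttt{listNext}$(\mathcal{K}\cup\mathcal{L},\mathcal{J})$ for $\mathcal{L}\subseteq\mathcal{X}$, where $\mathcal{X}$ is the set of minimal elements of $\mathcal{I}$ and $\mathcal{J} = \mathcal{I}\setminus\bigcup_{(S,T)\in\mathcal{X}\setminus\mathcal{L}}\mathcal{U}_{(S,T)}$. I need to show $\mathcal{K}\cup\mathcal{L}$ is an ideal of $\mathcal{P}_{\mathrm{irr}}$. Take $\mathbf{j}\in\mathcal{K}\cup\mathcal{L}$ and $\mathbf{i}\le\mathbf{j}$ with $\mathbf{i}$ irreducible; I must place $\mathbf{i}$ in $\mathcal{K}\cup\mathcal{L}$. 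If $\mathbf{j}\in\mathcal{K}$, then $\mathbf{i}\in\mathcal{K}$ by the inductive hypothesis (ideal property of $\mathcal{K}$). If $\mathbf{j}\in\mathcal{L}\subseteq\mathcal{X}$, then $\mathbf{j}$ is a minimal element of $\mathcal{I}$; since $\mathbf{i}\le\mathbf{j}$, either $\mathbf{i}=\mathbf{j}\in\mathcal{L}$, or $\mathbf{i}\notin\mathcal{I}$ by minimality of $\mathbf{j}$ — and by the companion invariant $\mathbf{i}\notin\mathcal{I}$ means $\mathbf{i}\le\bigsqcup\mathcal{K}$, which combined with $\mathbf{i}$ irreducible and Birkhoff/$\mathscr{S}$ being a ring of sets (so every irreducible below a join of a set of elements equals one of them, or more simply $\mathbf{i}\le\bigsqcup\mathcal{K}$ forces $\mathbf{i}\in\mathcal{K}$ when $\mathcal{K}$ is a down-set of irreducibles generating its own join) gives $\mathbf{i}\in\mathcal{K}$. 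This last implication — that an irreducible lying below $\bigsqcup\mathcal{K}$ must itself be a member of $\mathcal{K}$ when $\mathcal{K}$ is an ideal of irreducibles — is the crux, and it is exactly Birkhoff's theorem applied to the sublattice $\mathscr{S}$: the join of an ideal of irreducibles, intersected back with the irreducibles, recovers that ideal. Finally I would verify the companion invariant propagates: $\bigsqcup(\mathcal{K}\cup\mathcal{L}) = \bigsqcup\mathcal{K}\sqcup\bigsqcup\mathcal{L}$, and the set of irreducibles below it is $\{\mathbf{i}\le\bigsqcup\mathcal{K}\}\cup\bigcup_{(S,T)\in\mathcal{L}}\mathcal{U}_{(S,T)}^{c\,c}$... more carefully, the irreducibles removed in forming $\mathcal{J}$ are precisely those $\ge$ some element of $\mathcal{X}\setminus\mathcal{L}$, so $\mathcal{J}$ is the set of irreducibles in $\mathcal{I}$ that are not above any unselected minimal element, and one checks this equals $\mathcal{P}_{\mathrm{irr}}\setminus\{\mathbf{i}:\mathbf{i}\le\bigsqcup(\mathcal{K}\cup\mathcal{L})\}$ using that every element of $\mathcal{I}$ lies above at least one element of $\mathcal{X}$.

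\textbf{Main obstacle.} The delicate point is the step identifying ``$\mathbf{i}$ irreducible and $\mathbf{i}\le\bigsqcup\mathcal{K}$'' with ``$\mathbf{i}\in\mathcal{K}$'', i.e.\ ensuring no irreducible can ``hide'' below the join of the selected ones without being among them — and dually, making sure the $\mathcal{I}$-bookkeeping (the companion invariant) is airtight, since the ideal property of $\mathcal{K}\cup\mathcal{L}$ is only meaningful if we know $\mathcal{L}$ really is drawn from irreducibles that are ``new'' relative to $\mathcal{K}$. I expect this to reduce cleanly to Birkhoff's representation theorem and the ring-of-sets structure of $\mathscr{S}$ (so that joins are unions and an irreducible below a union of ideal-elements equals one of them), but stating it precisely and handling the bottom-element edge case is where the real care goes.
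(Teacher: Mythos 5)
Your overall route is the same as the paper's: an induction over the recursion tree, with the base case being the singleton containing the least satisfying min-cut and the inductive step observing that $\mathcal{K}$ only ever grows by minimal elements of the remaining irreducibles. (The paper's own proof is a two-sentence version of exactly this.) The place where your write-up goes wrong is the companion invariant. You assert that $\mathcal{I}$ equals the set of irreducibles not below $\bigsqcup\mathcal{K}$, and you lean on it for the crucial step ``$\mathbf{i}\notin\mathcal{I}$ implies $\mathbf{i}\le\bigsqcup\mathcal{K}$, hence $\mathbf{i}\in\mathcal{K}$.'' That invariant is not maintained by Algorithm~\ref{alg:enumeration}, in both directions. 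First, line 13 subtracts only the filters $\mathcal{U}_{(S,T)}$ for the \emph{unselected} minimal elements $(S,T)\in\mathcal{X}\setminus\mathcal{L}$, so the selected elements $\mathcal{L}$ remain in $\mathcal{J}$ even though they now lie in $\mathcal{K}\cup\mathcal{L}$ and hence below $\bigsqcup(\mathcal{K}\cup\mathcal{L})$. Second, every irreducible above an unselected minimal element is removed from $\mathcal{J}$ without being added to $\mathcal{K}$ and without being below $\bigsqcup(\mathcal{K}\cup\mathcal{L})$; in the extreme case $\mathcal{L}=\varnothing$ one gets $\mathcal{J}=\varnothing$ while $\mathcal{K}$ is unchanged. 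So the implication ``$\mathbf{i}\notin\mathcal{I}\Rightarrow\mathbf{i}\in\mathcal{K}$'' that your inductive step invokes is simply false for the sets the algorithm actually passes around, and the verification you defer to the end (``one checks this equals $\mathcal{P}_{\mathrm{irr}}\setminus\{\mathbf{i}:\mathbf{i}\le\bigsqcup(\mathcal{K}\cup\mathcal{L})\}$'') would fail.

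The fix is local: you only ever apply the invariant to an irreducible $\mathbf{i}$ that sits \emph{below some element of} $\mathcal{I}$ (namely below a minimal element $\mathbf{j}\in\mathcal{L}\subseteq\mathcal{I}$), so the invariant you actually need is the weaker statement that every irreducible which is $\le$ some member of $\mathcal{I}$ but is not itself in $\mathcal{I}$ already belongs to $\mathcal{K}$. This holds vacuously at the top level ($\mathcal{I}$ is all irreducibles), and it propagates: anything deleted from $\mathcal{I}$ in forming $\mathcal{J}$ lies above some $\mathbf{m}\in\mathcal{X}\setminus\mathcal{L}$, so it cannot lie below any survivor of $\mathcal{J}$ (otherwise that survivor would also be in $\mathcal{U}_{\mathbf{m}}$ and would have been deleted); hence any irreducible below a member of $\mathcal{J}$ but outside $\mathcal{J}$ was already outside $\mathcal{I}$, and the inductive hypothesis places it in $\mathcal{K}$. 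With that replacement your argument for the ideal property of $\mathcal{K}\cup\mathcal{L}$ goes through verbatim, and your use of Birkhoff's theorem (an irreducible below the join of an ideal of irreducibles belongs to that ideal) is fine. The bottom-element boundary case you flag is harmless under the paper's definition of irreducible, which admits the least element of the sublattice.
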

\begin{proof}
    Observe that the first invocation of \texttt{listNext} is such that $\mathcal{K}$ is the singleton given by the least min-cut satisfying $B$ (should one exist), while successive recursions always augment $\mathcal{K}$ by minimal irreducibles outside of $\mathcal{K}$.
    This strategy ensures that any irreducible lying between two cuts (with respect to refinement) in $\mathcal{K}$ is also contained in $\mathcal{K}$, thereby making $\mathcal{K}$ an ideal on the poset of irreducibles.
\end{proof}

\begin{theorem}\label{thm:enum-correct}
    Let $(V,E)$ be a $(s,t)$-flow network, and $B:(\mathcal{C}(s,t),\leq)\to\{\textsf{T},\textsf{F}\}$ a regular predicate.
    Then, Algorithm~\ref{alg:enumeration} lists each min-cut satisfying $B$ exactly once.
\end{theorem}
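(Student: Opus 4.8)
The plan is to reduce the statement to the combinatorial fact that the recursion walks the lattice of order ideals of the poset of irreducibles, hitting each ideal exactly once.

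\emph{Setup.} By Theorem~\ref{thm:min-cuts-ll} and Lemma~\ref{lem:basic-LLP}, $B_\text{MC}\wedge B$ is regular, so by Proposition~\ref{prop:semilattice-rep} the min-cuts satisfying $B$ form a sublattice $\mathscr{S}$ of $(\mathcal{C}(s,t),\leq)$. Let $P$ be the poset of join-irreducibles of $\mathscr{S}$; by Lemma~\ref{lem:moore} these are exactly the cuts produced by Algorithm~\ref{alg:irreducibles}, and the (non-irreducible) bottom $(S_\text{bot},T_\text{bot})$ equals the empty join $\bigsqcup\varnothing$. Birkhoff's theorem~\cite{birkhoff1937rings} makes $\mathcal{K}\mapsto\bigsqcup\mathcal{K}$ an order isomorphism from the order ideals of $P$ onto $\mathscr{S}$, and in particular a bijection, so it suffices to prove that the recursion lists $\bigsqcup\mathcal{K}$ for every order ideal $\mathcal{K}$ of $P$ (with $(S_\text{bot},T_\text{bot})$ tacitly kept in $\mathcal{K}$) exactly once. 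Soundness --- every listed cut lies in $\mathscr{S}$ --- is immediate from Lemma~\ref{lem:ideal}, which already gives that $\mathcal{K}$ is always an ideal.

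\emph{Main induction.} I would prove by induction on $|\mathcal{I}|$ the invariant that in every reached call \texttt{listNext}$(\mathcal{K},\mathcal{I})$ the set $\mathcal{K}$ is an order ideal of $P$, $\mathcal{K}\cap\mathcal{I}=\varnothing$, and $\mathcal{K}\cup\mathcal{I}$ is again an order ideal, and that the subtree rooted there lists exactly $\{\bigsqcup(\mathcal{K}\cup J):J\text{ an order ideal of the subposet }\mathcal{I}\}$, each once. The engine of the inductive step is that the order ideals of $\mathcal{I}$ are \emph{partitioned} by their trace $\mathcal{L}=J\cap\mathcal{X}$ on the antichain $\mathcal{X}$ of minimal elements of $\mathcal{I}$: an ideal with trace $\mathcal{L}$ avoids $\mathcal{X}\setminus\mathcal{L}$ and, being downward closed, avoids each principal up-set $\mathcal{U}_x$ for $x\in\mathcal{X}\setminus\mathcal{L}$; removing also the already-committed $\mathcal{L}$, such ideals are in bijection with the order ideals $J'$ of $\mathcal{J}:=\bigl(\mathcal{I}\setminus\mathcal{L}\bigr)\setminus\bigcup_{x\in\mathcal{X}\setminus\mathcal{L}}\mathcal{U}_x$ via $J\mapsto\mathcal{L}\cup J'$, and $\bigsqcup(\mathcal{K}\cup J)=\bigsqcup\bigl((\mathcal{K}\cup\mathcal{L})\cup J'\bigr)$. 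The loop of Algorithm~\ref{alg:enumeration} recurses exactly on \texttt{listNext}$(\mathcal{K}\cup\mathcal{L},\mathcal{J})$ for each nonempty $\mathcal{L}$, while the leading \textbf{list} handles $\mathcal{L}=\varnothing$ (which forces $J=\varnothing$); distinct traces give disjoint families, so branches do not overlap, the inductive hypothesis forbids repeats inside a branch, and the base case $\mathcal{I}=\varnothing$ lists the single join. One checks the invariant descends: $\mathcal{K}\cup\mathcal{L}$ is an ideal since the elements of $\mathcal{L}$ are minimal in $\mathcal{K}\cup\mathcal{I}$ (the argument of Lemma~\ref{lem:ideal}), $\mathcal{J}$ is disjoint from $\mathcal{K}\cup\mathcal{L}$ by construction, and $(\mathcal{K}\cup\mathcal{L})\cup\mathcal{J}=(\mathcal{K}\cup\mathcal{I})\setminus\bigcup_{x\in\mathcal{X}\setminus\mathcal{L}}\mathcal{U}_x$ is an ideal because deleting principal up-sets from an ideal leaves an ideal. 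Applying the invariant to the initial call --- arguments $\{(S_\text{bot},T_\text{bot})\}$ and the full $P$, with $\mathcal{K}\cup\mathcal{I}=P$ trivially an ideal --- yields exactly $\{\bigsqcup J:J\text{ an order ideal of }P\}=\mathscr{S}$, each once.

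\emph{Main obstacle.} The delicate part is uniqueness: one must show the trace partition matches the recursive branches one-for-one, which forces the set passed downward to drop \emph{both} the up-sets of the discarded minimal elements \emph{and} the newly committed elements $\mathcal{L}$ (else $\bigsqcup(\mathcal{K}\cup\mathcal{L})$ is re-listed in a deeper call), all while carrying the three structural invariants through and keeping the Birkhoff bookkeeping straight for $(S_\text{bot},T_\text{bot})$, which is not join-irreducible but must persist in $\mathcal{K}$ so that $\bigsqcup\mathcal{K}$ is meaningful before any irreducible has been added. The parallelism over $\mathcal{X}$ and over the subsets $\mathcal{L}$ is correctness-neutral because the branches are independent.
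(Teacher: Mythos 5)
Your proof is correct, but it takes a genuinely different route from the paper's. The paper splits the claim in two: uniqueness is proved by contradiction, taking two invocations that list the same cut, passing to their lowest common ancestor $\texttt{listNext}(\mathcal{K}_0,\mathcal{I}_0)$ in the recursion tree, and showing via Birkhoff's theorem that only the branch $\mathcal{L}_0=\mathcal{Z}$ (the minimal irreducibles of $\mathcal{K}\setminus\mathcal{K}_0$) can lead to a descendant listing that cut, contradicting minimality of the ancestor; completeness is then a separate induction on $|\mathcal{K}|$, peeling off a maximal irreducible. You instead run a single structural induction on $|\mathcal{I}|$ with a stronger invariant --- each subtree lists exactly $\{\bigsqcup(\mathcal{K}\cup J): J \text{ an ideal of } \mathcal{I}\}$, each once --- powered by the observation that the ideals of $\mathcal{I}$ are partitioned by their trace on the antichain of minimal elements, with each trace class in bijection with the ideals of the reduced poset handed to the corresponding child. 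This buys both directions at once and makes the ``no two branches overlap'' step transparent, where the paper must argue it indirectly through the lowest common ancestor. One divergence worth flagging: you define the child pool as $(\mathcal{I}\setminus\mathcal{L})\setminus\bigcup_{x\in\mathcal{X}\setminus\mathcal{L}}\mathcal{U}_x$, whereas line 13 of Algorithm~\ref{alg:enumeration} as written does not subtract $\mathcal{L}$; under the literal reading the committed minimal elements remain available, a child can re-select a subset of $\mathcal{L}$, and $\bigsqcup(\mathcal{K}\cup\mathcal{L})$ gets re-listed (and your invariant $\mathcal{K}\cap\mathcal{I}=\varnothing$ fails). You correctly identify this, and the exclusion of $\mathcal{L}=\varnothing$ from the loop, as necessary; the paper's own uniqueness argument tacitly assumes the same reading when it asserts that the first parameter ``increases over successive recursive calls,'' so yours is the reading under which both proofs go through.
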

\begin{proof}
    First, by way of contradiction, suppose a min-cut $(S,T)$ is listed twice.
    By Lemma~\ref{lem:ideal} every listed min-cut is the join of irreducibles forming an ideal (see line 7 of Algorithm~\ref{alg:enumeration}).
    Hence, by Birkhoff's theorem, the ideal corresponding to $(S,T)$ must have been passed into \texttt{listNext} (at least) two times.
    Let this ideal be $\mathcal{K}$, so that $(S,T) = \bigsqcup \mathcal{K}$, and consider lowest ancestor in the recursion tree common to the two invocations listing $(S,T)$.
    Refer to this ancestor via $\texttt{listNext}(\mathcal{K}_0,\mathcal{I}_0)$, and let $\mathcal{Z}$ equal the least irreducibles in the difference $\mathcal{K}\setminus\mathcal{K}_0$ (which is non-empty by $\mathcal{K}\neq\mathcal{K}_0$).
    Because $\mathcal{K}$ and $\mathcal{K}_0$ are both ideals, this construction makes $\mathcal{Z}\subseteq \mathcal{X}_0$, where $\mathcal{X}_0$ equals the set $\mathcal{X}$ computed in the invocation $\texttt{listNext}(\mathcal{K}_0,\mathcal{I}_0)$.
    Now, select any $\mathcal{L}_0\subseteq \mathcal{X}_0$, and let $\mathcal{J}_0$ correspond to the $\mathcal{J}$ computed in $\texttt{listNext}(\mathcal{K}_0,\mathcal{I}_0)$ using $\mathcal{L}_0$.
    If $\mathcal{Z}\not\subseteq\mathcal{L}_0$, there is an irreducible in $\mathcal{Z}$ which is not contained in $\mathcal{J}_0$.
    By noting $\mathcal{Z} \cap \mathcal{K}\neq\varnothing$, we see by Lemma~\ref{lem:ideal} it is impossible for any descendant of $\texttt{listNext}(\mathcal{K}_0\cup\mathcal{L}_0,\mathcal{J}_0)$ in the recursion tree to list $(S,T)$.
    Moreover, if $\mathcal{Z}\subsetneq \mathcal{L}_0 $ then $\mathcal{K}_0\cup\mathcal{L}_0$ contains an irreducible not in $\mathcal{Z}$.
    Notice, $\mathcal{X}_0\setminus\mathcal{Z}$ has empty intersection with $\mathcal{K}\setminus\mathcal{K}_0$ by construction, and so it is likewise impossible for any descendant of $\texttt{listNext}(\mathcal{K}_0\cup\mathcal{L}_0,\mathcal{J}_0)$ in the recursion tree to list a join over $\mathcal{K}$, as this parameter increases over successive recursive calls.
    Therefore, $\mathcal{L}_0 = \mathcal{Z}$ is the only possible choice for which the $\texttt{listNext}(\mathcal{K}_0\cup\mathcal{L}_0,\mathcal{J}_0)$ has a descendant in the recursion tree listing a join over $\mathcal{K}$.
    This implies that the invocation $\texttt{listNext}(\mathcal{K}_0\cup\mathcal{L}_0,\mathcal{J}_0)$ is a lower common ancestor than $\texttt{listNext}(\mathcal{K}_0,\mathcal{I}_0)$ to the two invocations listing $(S,T)$.
    However, this contradicts our assumption that $\texttt{listNext}(\mathcal{K}_0,\mathcal{I}_0)$ is the lowest common ancestor.

    Finally, we show it is not possible for the algorithm to miss listing some min-cut satisfying $B$.
    Select such a cut $(S,T)$ and let $\mathcal{K}$ be the unique ideal in the poset of irreducibles of min-cuts satisfying $B$ such that $(S,T) = \bigsqcup \mathcal{K}$.
    We apply induction over $|\mathcal{K}|$.
    Suppose $|\mathcal{K}| = 1$, then it follows that $(S,T)$ is the least min-cut satisfying the predicate.
    This is listed in the first invocation of \texttt{listNext}.
    Now, for the inductive step select any greatest cut $(S', T')$ in $\mathcal{K}$.
    It follows by hypothesis that $\bigsqcup \left(\mathcal{K} - (S',T')\right)$ is listed.
    Examine the invocation of $\texttt{listNext}(\mathcal{K} - (S',T'),\mathcal{I})$ in particular.
    Since $\mathcal{K}$ is an ideal, every irreducible refined by $(S', T')$ is contained in $\mathcal{K}$.
    This implies that $(S', T') \in \mathcal{I}$, as inspecting Algorithm~\ref{alg:enumeration} shows that an irreducible is not in $\mathcal{I}$ only if it is greater than some cut in $\mathcal{K}-(S',T')$.
    Observe then that $(S',T')$ is a least element in $\mathcal{I}\setminus(\mathcal{K}-(S',T'))$.
    Since $\{(S',T')\} \subseteq \mathcal{L}$ will hold after computing $\mathcal{L}$, it is clear that $(S,T) = \bigsqcup \mathcal{K} $ will be listed in an immediate child invocation in the recursion tree.
\end{proof}

\subsection{Solving General Constraints via Poset Slicing}\label{subsec:general}
\begin{algorithm}[t]
    \caption{Min-Cuts with General Predicates}\label{alg:slicing}
\DontPrintSemicolon
\SetKwRepeat{Do}{do}{while}
\SetKwInOut{KwInput}{input~}
\SetKwInOut{KwOutput}{output~}
\KwInput{$B_\text{reg}, B:(\mathcal{C}(s,t),\leq)\to\{\textsf{T},\textsf{F}\}$ with $B_\text{reg}$ regular, and $(s,t)$-network $(V, E)$.}
\KwOutput{A min-cut satisfying $B_\text{reg}\wedge B$, if one exists.}

\vspace{\baselineskip}

\ForEach{min-cut $(S,T)$ satisfying $B_\text{reg}$}
{
    \lIf{$B(S,T)$}
    {
        \Return $(S,T)$
    }
}
\;
\Return $\bot$
\end{algorithm}
Finally, we propose an algorithm to find a min-cut satisfying a more general set of constraints.
Specifically, let $B_\text{reg}$ be any regular predicate and $B$ any kind of predicate, possibly not lattice-linear.
We give an algorithm using poset slicing \cite{garg2001slicing,Garg06-tcs}, a technique for operating over a structured subset of a lattice.
Using Algorithm~\ref{alg:enumeration} we enumerate only the min-cuts satisfying $B_\text{reg}$, and evaluate for satisfaction of the general $B$.
Note, this algorithm is not guaranteed to be efficient as there may be exponentially many min-cuts satisfying $B_\text{reg}$.
Yet, one can select the regular predicate $B_\text{reg}$ as a heuristic to limit the search space size.
One could also select many regular predicates, and apply this method repeatedly for an increased level of flexibility.
This idea is given in Algorithm~\ref{alg:slicing}.
   
For each enumerated cut we evaluate the predicate $B$.
Let the time and work complexities of doing so be $T_B$ and $W_B$.
Then, the complexity is driven by Algorithms~\ref{alg:irreducibles} and~\ref{alg:enumeration}.
Specifically, if there are $R$ min-cuts satisfying $B_\text{reg}$ and one can compute a forbidden vertices in $T_\text{reg}(n,m)$ time with $W_\text{reg}$ work, then Algorithm~\ref{alg:slicing} terminates in $O(T_\text{MF}+nT_\text{reg} + (R\log n)\cdot T_B)$ time using $O(W_\text{MF} + n^2(m+n+W_\text{reg}) + (Rn^2)\cdot W_B)$ work.

\section{Conclusion}

We apply the LLP method to give a unified algorithmic framework for min-cuts under additional constraints.
By viewing min-cuts as a regular predicate, we obtain simple parallel procedures that advance any cut toward feasibility.
Constraints defined by lattice-linear predicates capture structural restrictions and several network constructions used in prior work \cite{provan1989exact,provan1996paradigm,sigal1980stochastic,avramidis1996integrated}.
This flexibility yields immediate applications to computing irreducibles and enumeration, and our $k$-transition and strong-advancement analyses sharpen the resulting complexity bounds.
Extending these ideas beyond lattice-linear predicates, e.g. to the related \emph{equilevel predicates} \cite{DBLP:conf/icdcn/GargS24}, is an exciting direction for future work.


\bibliographystyle{IEEEtran}
\bibliography{fmaster}

\end{document}